\title{\bf Canonical forms for single-qutrit Clifford+$T$ operators}
\author{Andrew N.\ Glaudell,$^{1,2}$ Neil J.\ Ross,$^{2,3}$ and Jacob
  M.\ Taylor$^{1,2,4}$ \\[2pt]
  \small $^{1}$ Joint Quantum Institute, University of Maryland,
  College Park, MD, USA \\
  \small $^{2}$ Institute for Advanced Computer Studies and Joint
  Center for Quantum Information and Computer Science, \\
  \small University of Maryland, College Park, MD, USA \\
  \small $^{3}$ Department of Mathematics and Statistics, Dalhousie
  University, Halifax, NS, Canada \\
  \small $^{4}$ National Institute of Standards and Technology,
  Gaithersburg, MD, USA}
\date{}
\begin{document}

\maketitle

\begin{abstract}
  We introduce canonical forms for single qutrit Clifford+$T$ circuits
  and prove that every single-qutrit Clifford+$T$ operator admits a
  unique such canonical form. We show that our canonical forms are
  T-optimal in the sense that among all the single-qutrit Clifford+T
  circuits implementing a given operator our canonical form uses the
  least number of $T$ gates. Finally, we provide an algorithm which
  inputs the description of an operator (as a matrix or a circuit) and
  constructs the canonical form for this operator. The algorithm runs
  in time linear in the number of $T$ gates. Our results provide a
  higher-dimensional generalization of prior work by Matsumoto and
  Amano who introduced similar canonical forms for single-qubit
  Clifford+T circuits.
\end{abstract}

\section{Introduction}
\label{sec:intro}

Over the past five years, algebraic and number-theoretic methods have
rejuvenated the field of quantum compiling. The use of these new
mathematical techniques led to the discovery of algorithms superseding
the famed Solovay-Kitaev algorithm \cite{Dawson-Nielsen,KSV2002} for
the approximation of single-qubit unitary operators. The first such
improved algorithms were defined for the \emph{Clifford+$T$} gate set
\cite{KMM-practical,KMM-approx,RS16,S14} but these methods were later
generalized to various single-qubit gate sets
\cite{BBG2014,BGS2013,KBRY2015,KBS2013,vsynth}. Until very recently
and despite the existence of these successful methods in qubit quantum
compiling, the Solovay-Kitaev algorithm remained the standard method
in higher dimensions. However, advances in anyonic quantum computation
\cite{BCKZ15}, the discovery of protocols for higher-dimensional magic
state distillation \cite{WCAB15}, and the emergence of novel means of
error correction using higher dimensional Hilbert spaces
\cite{PSSMBC,SSCOSADMA,NCQEC,HEBQEC} have drawn the attention of the
community to qutrit quantum compiling \cite{B16,BCRS15,BCKZ15,BRS16}.

The single-qutrit Clifford+$T$ gate set, sometimes also referred to as
the \emph{supermetaplectic} gate set \cite{BCRS15}, consists of the
single-qutrit Clifford gates together with a three-dimensional
analogue of the single-qubit $T$ gate. The qutrit version of the $T$
gate was independently introduced in \cite{HV12} and \cite{WCAB15} and
shares many properties with its qubit counterpart. Most importantly,
it can be fault-tolerantly implemented via magic state distillation
\cite{WCAB15}.

In this paper, we introduce canonical forms for single-qutrit
Clifford+$T$ circuits inspired by prior work on single-qubit
Clifford+$T$ circuits \cite{FGKM15,ma-remarks,KMM,MA08}. We prove that
every single-qutrit Clifford+$T$ operator admits a canonical form and
give a linear-time algorithm to convert an arbitrary Clifford+$T$
circuit to canonical form. Finally, we show that distinct canonical
forms represent distinct operators. We establish the uniqueness of
canonical form representation by giving an algorithm which inputs the
matrix of a Clifford+$T$ operator and deterministically construct a
canonical form circuit for it. This uniqueness property implies that
our canonical forms are $T$-optimal: among all the single-qutrit
Clifford+T circuits implementing a given operator our canonical form
uses the least number of $T$ gates. This $T$-optimality is desirable
in light of the high cost associated with fault-tolerantly
implementing $T$ gates.

The paper is organized as follows. We introduce the Clifford+$T$
operators in Section~\ref{sec:ct}. We define canonical forms and prove
that every Clifford+$T$ operator admits a canonical form in
Section~\ref{sec:canonical}. Finally, we prove the uniqueness of
canonical form representations in Section~\ref{sec:uniqueness}.

\paragraph{Related work:} After the present work was completed, it was
brought to our attention that S. Prakash, A. Jain, B. Kapur, and
S. Seth independently established similar results in \cite{PJKS}.

\section{The group of single-qutrit Clifford+\texorpdfstring{$T$}{T}
  operators}
\label{sec:ct}

In what follows, $\Z$ denotes the set of integers, $\N$ the set of
nonnegative integers, and $\Z_n$ the set of integers modulo $n$. We
recall the definition of the qutrit \emph{Pauli} and \emph{Clifford}
operators \cite{G98, K96a, K96b}. Let $\omega = e^{2\pi i /3}$ and
$\zeta= e^{2\pi i / 9}$ be primitive third and ninth roots of unity so
that $\omega^3 = 1$ and $\zeta^9 = 1$. The single-qutrit \emph{Pauli
  group $\Pauli$} is generated by
\[
X := \begin{bmatrix} 0 & 0 & 1 \\ 1 & 0 & 0 \\ 0 & 1 & 0 \end{bmatrix}
\quad \mbox{and} \quad Z := \begin{bmatrix} 1 & 0 & 0 \\ 0 & \omega &
  0 \\ 0 & 0 & \omega^2 \end{bmatrix}.
\]
The Pauli group has 27 elements but only 9 if global phases can be
ignored. The single-qutrit \emph{Clifford group $\Clifford$} is the
normalizer of $\Pauli$ in the special unitary group of order 3. The
Clifford group is generated by the single-qutrit \emph{Hadamard} gate
$H$ and the single-qutrit \emph{phase} gate $S$
\[
H := \frac{1}{\sqrt{-3}}\begin{bmatrix} 1 & 1 & 1 \\ 1 & \omega &
  \omega^2 \\ 1 & \omega^2 & \omega \end{bmatrix} \quad \mbox{and}
\quad S := \zeta^8\begin{bmatrix} 1 & 0 & 0 \\ 0 & 1 & 0 \\ 0 & 0 &
\omega \end{bmatrix}
\]
where $\sqrt{-3} = i \sqrt{3}$. The above definitions of $H$ and $S$
differ from the standard ones by a global phase. As a result, both $H$
and $S$ have determinant 1. The Clifford group has 648 elements but
only 216 up to global phases. The single-qutrit \emph{$T$} gate,
introduced in \cite{HV12} and \cite{WCAB15}, is defined as
\[
T := \begin{bmatrix} 1 & 0 & 0 \\ 0 & \zeta & 0 \\ 0 & 0 &
  \zeta^8 \end{bmatrix}.
\]
The operators $H$, $S$, and $T$ form a universal gate set and generate
the single-qutrit \emph{Clifford+$T$ group}, which we sometimes denote
by $\Clifford$+$T$.

If $U$ is Clifford+$T$ operator, a \emph{circuit} for $U$ is a word
$W_1 \ldots W_k$ in the generators such that the product of the
generators appearing in $W$ is equal to $U$, i.e., $W_1 \cdot \ldots
\cdot W_k=U$. If $U$ is a Clifford+$T$ circuit, the \emph{$T$-count}
of $U$ is the number of occurences of $T$ in $U$.

\section{Canonical forms}
\label{sec:canonical}

We define a three-dimensional analogue of the canonical forms
introduced in \cite{MA08} for single qubit Clifford+$T$ circuits and
we prove that every single-qutrit Clifford+$T$ operator can be
represented by a canonical form. Our presentation follows
\cite{ma-remarks}.

\begin{definition}
  \label{def:canonical}
  A \emph{canonical form} is a Clifford+$T$ circuit of the form
  \begin{align}	\label{eq:canonicaldef}
    (\epsilon\,|\,T\,|\,H^2 T)(HT\,|\,H^3 T\,|\,S H T\,|\,S H^3
    T\,|\,S^2 H T\,|\,S^2 H^3 T)^*\Clifford.
  \end{align}
\end{definition}

Here, following \cite{ma-remarks}, we use the language of
\emph{regular expressions} to define canonical forms. In
\cref{eq:canonicaldef}, $\epsilon$ denotes the empty word and
$\Clifford$ denotes any of one of the 648 Clifford
operators. \cref{eq:canonicaldef} therefore states that a canonical
form (read from left to right) consists of an optional occurence of
$T$ or $H^2T$, any number of \emph{syllables} chosen from the set $\{H
T, H^3 T, S H T, S H^3 T, S^2 H T, S^2 H^3 T\}$, and a final Clifford
operator.

\begin{definition}
  Let $\Ss$ be the 81-element subgroup of $\Clifford$ group generated
  by $S$ and $X$, $\Mm$ be the two element subgroup of $\Clifford$
  generated by $H^2$, and $\Ll$ and $\Ll'$ be the following sets of
  Clifford operators:
  \[
  \Ll=\{\Id,H,S H, S^2 H\} ~~\mbox{ and }~~ \Ll'=\{H,S H, S^2 H\}.
  \]
\end{definition}

Note that $\omega \in \Ss$ and that $\Mm\Ss=\Ss\Mm$ is the 162-element
subgroup of $\Clifford$ which consists of generalized permutation
matrices. Note moreover that the syllables used in
\cref{def:canonical} are the elements of $\Ll\Mm T$.

\begin{lemma}
  The following relations hold.
  \begin{align}
    \Clifford&=\Ll\Mm\Ss\label{eq:CliffSub}\\ \Ss T&=T
    \Ss\label{eq:STrelation}\\ T T &= H^2 T H^2 Z\subseteq\Mm T \Mm
    \Ss\label{eq:TTrelation}\\ T H^2
    T&=H^2\subseteq\Mm.\label{eq:TH2Trelation}
  \end{align}
\end{lemma}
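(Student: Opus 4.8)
The plan is to verify the four relations one by one, mostly by direct matrix computation, leveraging the explicit $3\times 3$ matrices given for $X$, $Z$, $H$, $S$, and $T$.

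For \cref{eq:CliffSub}, I would argue by counting together with a containment. The group $\Ss$ has $81$ elements, $\Mm$ has $2$, and $\Ll$ has $4$, so $|\Ll\Mm\Ss|\le 648=|\Clifford|$, with equality iff every product $\ell m s$ with $\ell\in\Ll$, $m\in\Mm$, $s\in\Ss$ is distinct. Since $\Mm\Ss=\Ss\Mm$ is the $162$-element subgroup of generalized permutation matrices, it suffices to check that the four cosets $\Id\cdot(\Mm\Ss)$, $H\cdot(\Mm\Ss)$, $SH\cdot(\Mm\Ss)$, $S^2H\cdot(\Mm\Ss)$ are pairwise disjoint; equivalently, that none of $H$, $SH$, $S^2H$, and the pairwise ``quotients'' $H^{-1}SH$, etc., lies in $\Mm\Ss$. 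This is immediate since $H$ (and hence $SH$, $S^2H$) is not a generalized permutation matrix, and the relevant quotients can be checked to have the same property by inspection. Then $|\Ll\Mm\Ss|=4\cdot 162=648$ forces $\Clifford=\Ll\Mm\Ss$.

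For \cref{eq:STrelation}, note that $\Ss$ is generated by $S$ and $X$, so it suffices to show $STS^{-1}$ and $XTX^{-1}$ lie in $\Ss$ after multiplying by $T^{\pm1}$ appropriately; in fact the cleanest route is to show directly that $S T = T S$ and $X T = T' X$ for a suitable $T'$ — but since $T$ is diagonal and $S$ is diagonal (up to the scalar $\zeta^8$), $ST=TS$ is trivial. For $X$, conjugation by the cyclic shift $X$ permutes the diagonal entries of $T$ cyclically, yielding another diagonal matrix of the form $\mathrm{diag}(\zeta^8,1,\zeta)$; multiplying by the $\Ss$-element $\omega^k$ (recall $\omega\in\Ss$) and by $T$ shows $XTX^{-1}\in T\Ss$, hence $\Ss T=T\Ss$. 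One then checks $T\Ss\subseteq \Ss T$ symmetrically, or simply observes the argument is reversible.

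For \cref{eq:TTrelation} and \cref{eq:TH2Trelation}, I would just multiply matrices. We have $T^2=\mathrm{diag}(1,\zeta^2,\zeta^{16})=\mathrm{diag}(1,\zeta^2,\zeta^7)$, and I would expand $H^2 T H^2 Z$ using $H^2=\frac{1}{-3}\begin{bmatrix}\cdots\end{bmatrix}$ (which is a signed permutation-type Clifford), confirming it equals $T^2$; the containment $T^2\in\Mm T\Mm\Ss$ then follows from reading off $T^2 = H^2 T H^2 Z$ with $H^2\in\Mm$, $Z\in\Ss$. Similarly $TH^2T$ is a product of three explicit matrices which I would compute to be exactly $H^2$, giving the last relation and its containment in $\Mm$.

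The main obstacle is purely bookkeeping: getting the global phases right. The nonstandard normalizations of $H$ and $S$ (chosen so $\det H=\det S=1$) mean that $H^2$, $Z$, and powers of $T$ carry factors of $\zeta$ and $\sqrt{-3}$ that must cancel exactly for the stated equalities — not just equalities up to phase — to hold. I would handle this by tracking everything in terms of $\zeta$ (with $\omega=\zeta^3$, $\sqrt{-3}=$ the appropriate power of $\zeta$ times a sign), reducing exponents mod $9$ at the end. No conceptual difficulty is expected beyond this careful phase accounting.
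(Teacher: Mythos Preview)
Your proposal is correct and follows essentially the same approach as the paper: coset decomposition plus counting for \cref{eq:CliffSub}, the commutation relations $ST=TS$ and $XT=TSX$ (together with the trivial $\omega T=T\omega$) for \cref{eq:STrelation}, and direct matrix computation for \cref{eq:TTrelation} and \cref{eq:TH2Trelation}. One small correction to your closing remark: $\sqrt{-3}$ is not a power of $\zeta$ times a sign (it equals $\zeta^3-\zeta^6$), but this is moot since $H^2=-\left(\begin{smallmatrix}1&0&0\\0&0&1\\0&1&0\end{smallmatrix}\right)$ contains no $\sqrt{-3}$ at all, so the phase bookkeeping is simpler than you anticipated.
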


\begin{proof}
  \cref{eq:CliffSub} follows from the fact that the Clifford operators
  are a disjoint union of the cosets of $\Ss$ which are $\Ss$,
  $H^2\Ss$, $H\Ss$, $H^3\Ss$, $S H\Ss$, $S H^3\Ss$, $S^2 H\Ss$, and
  $S^2 H^3\Ss$. \cref{eq:STrelation} follows the three commutation
  relations $S T = T S$, $X T = T S X$, and $\omega T = T
  \omega$. Finally, \cref{eq:TTrelation,eq:TH2Trelation} follow from
  direct computation.
\end{proof}

\begin{lemma}
  \label{lem:TPow}
  An integer power of $T$ is either a Pauli operator or is Clifford
  equivalent to $T$. That is, for $a\in\Z$ we have
  \[
    T^a=\begin{cases} Z^\frac{a}{3} & a=0\mod 3\\ T Z^\frac{a-1}{3} &
    a=1\mod 3\\ H^2 T H^2 Z^\frac{a+1}{3} & a=2\mod 3
    \end{cases}
  \]
\end{lemma}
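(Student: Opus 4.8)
The plan is to prove \cref{lem:TPow} by induction on $|a|$, using the relations of the previous lemma — chiefly \cref{eq:TTrelation}, which tells us that $T^2 = H^2 T H^2 Z$, together with the fact that $T^3 = Z$ (a special case worth recording first, since $T$ is the diagonal matrix with entries $1, \zeta, \zeta^8$, so $T^3$ is diagonal with entries $1, \zeta^3, \zeta^{24} = 1 \cdot \omega, \omega^{8} $; more simply $\zeta^3 = \omega$ and $\zeta^{24} = \omega^8 = \omega^2$, giving $T^3 = Z$). Actually, the cleanest route is to first establish $T^3 = Z$ by direct computation and then observe that $Z$ is central-up-to-Pauli in the relevant sense — more precisely, $Z$ commutes with $T$ — so that $T^{a+3} = T^a Z$. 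This immediately reduces the claim to the three base cases $a = 0, 1, 2$, i.e.\ $T^0 = \Id = Z^0$, $T^1 = T = T Z^0$, and $T^2 = H^2 T H^2 Z$, the last of which is exactly \cref{eq:TTrelation}.

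First I would record $T^3 = Z$ and $ZT = TZ$ by direct inspection of the diagonal matrices. Next, for $a \geq 0$ I would argue by strong induction: writing $a = 3q + r$ with $r \in \{0,1,2\}$ and $q \geq 0$, we have $T^a = T^r (T^3)^q = T^r Z^q$, and then substituting the three values of $T^r$ from the base cases gives exactly the three branches of the stated formula, with $q$ equal to $a/3$, $(a-1)/3$, or $(a+1)/3$ respectively according to the residue of $a$ mod $3$. For negative $a$, I would either run the same decomposition using that $Z$ is invertible with $Z^{-1} = Z^2$ a Pauli (so negative powers of $Z$ are still Paulis), or simply note $T^{-1} = T^8 Z^{-3}$-type identities; in any case the formula as written already makes sense for $a \in \Z$ once one reads $Z^{k}$ for negative $k$ as the Pauli $Z^{k \bmod 3}$, so no separate argument is really needed beyond observing that the exponents $\tfrac{a}{3}, \tfrac{a-1}{3}, \tfrac{a+1}{3}$ are integers precisely in the indicated residue classes.

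The only genuine computational inputs are: (i) $T^3 = Z$; (ii) $T Z = Z T$; and (iii) the identity $T^2 = H^2 T H^2 Z$, which we are permitted to assume since it is \cref{eq:TTrelation}. I expect there is no real obstacle here — the lemma is essentially a bookkeeping consequence of \cref{eq:TTrelation} — but the one point to be careful about is matching the exponent of $Z$ in each case (for instance, when $a \equiv 2 \bmod 3$, writing $a = 3q+2$ gives $T^a = T^2 Z^q = H^2 T H^2 Z \cdot Z^q = H^2 T H^2 Z^{q+1}$, and $q + 1 = (a+1)/3$, which is indeed an integer when $a \equiv 2$). Phrasing the induction cleanly via the Euclidean decomposition $a = 3q + r$ avoids any sign or off-by-one pitfalls, and the statement that such a power is "either a Pauli operator or Clifford equivalent to $T$" then reads off immediately: in the $r = 0$ branch the result is the Pauli $Z^{a/3}$, and in the $r = 1, 2$ branches it equals $T$ conjugated/multiplied by Clifford operators (noting $H^2 \in \Clifford$ and $Z \in \Pauli \subseteq \Clifford$).
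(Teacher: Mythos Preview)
Your proposal is correct and follows essentially the same approach as the paper, which simply cites \cref{eq:TTrelation,eq:TH2Trelation} together with $T^9=\Id$ and $TZ=ZT$. The only cosmetic difference is that you establish $T^3=Z$ by direct inspection of the diagonal entries, whereas the paper's cited relations yield it via $T^3 = (H^2 T H^2 Z)T = H^2(T H^2 T)Z = H^4 Z = Z$; either route reduces the lemma to the three base cases $a=0,1,2$ exactly as you describe.
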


\begin{proof}
  This is a consequence of \cref{eq:TTrelation,eq:TH2Trelation} and
  the relations $T^9=\Id$ and $T Z=Z T$.
\end{proof}

\begin{definition}
  The \emph{Clifford-prefix} of a syllable $M=M'T\in\Ll\Mm T$ is the
  Clifford operator $M'\in\Ll\Mm$ that precedes $T$.
\end{definition}

\begin{proposition}
  \label{prop:canonicalexistence}
  Every Clifford+$T$ operator can be represented by a circuit in
  canonical form.
\end{proposition}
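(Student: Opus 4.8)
The plan is to proceed by induction on the $T$-count of an arbitrary Clifford+$T$ circuit. The base case is a circuit with no $T$ gates, i.e. a Clifford operator, which is already in canonical form (the empty word, no syllables, final Clifford). For the inductive step, given a circuit $C$ with $T$-count $n \geq 1$, write $C = C' T U$ where $U$ is Clifford and $C'$ has $T$-count $n-1$. By induction, $C'$ has a canonical form, say $C' = P \, M_1 \cdots M_k \, V$, where $P \in \{\epsilon, T, H^2 T\}$ is the optional prefix, each $M_i \in \Ll\Mm T$ is a syllable, and $V \in \Clifford$ is the final Clifford. So $C$ equals $P \, M_1 \cdots M_k \, V T U$, and the task is to absorb the extra $V T$ on the right back into canonical shape while keeping the $T$-count from increasing.

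The key computation is to push the $V$ through the $T$. Using \cref{eq:CliffSub} write $V = L N s$ with $L \in \Ll$, $N \in \Mm$, $s \in \Ss$; by \cref{eq:STrelation} we have $sT = Ts'$ for some $s' \in \Ss$, so $VT = LN T s'$, and then either $LNT$ is already a syllable (if $LN \in \Ll\Mm$, which it is) — so $V T U = (LNT)(s' U)$ with $s'U$ Clifford. This shows how to append one more syllable. The remaining issue is the interaction at the \emph{seam}: after appending, the word reads $P\, M_1 \cdots M_k\, (LNT)\, (s'U)$, but we need the last syllable $M_k$ and the new syllable $LNT$ to be compatible — the Clifford-prefix of a syllable lies in $\Ll\Mm$, and a priori $M_k$ ends in $T$ while $LNT$ begins with a Clifford, so there is a hidden $TV'T$ pattern (with $V'$ the Clifford between the two $T$'s). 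This is exactly where \cref{eq:TTrelation} and \cref{eq:TH2Trelation} come in: depending on whether that intermediate Clifford is, up to $\Ss$, the identity or $H^2$ (the two cosets of $\Ss$ in $\Mm\Ss$), the pair $T V' T$ either collapses via $T H^2 T = H^2 \in \Mm$ (reducing the $T$-count, so we fall to a shorter word and re-apply induction) or rewrites via $TT = H^2 T H^2 Z$ (keeping the $T$-count, absorbing the leftover into a single new syllable plus a trailing Clifford), or, if $V'$ is a genuine syllable-prefix not in $\Mm\Ss$, it simply concatenates cleanly as a new syllable. One must also handle the prefix $P$: if the word has no syllables yet, the new $LNT$ together with $P$ may need to be rewritten so that the prefix is one of $\epsilon, T, H^2T$ and whatever is left becomes the first syllable — again this is a finite case analysis using \cref{lem:TPow} and the coset structure.

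I would organize the proof as a rewriting procedure: repeatedly identify the rightmost occurrence of a forbidden pattern (a $T$ immediately followed, up to an intervening Clifford decomposed via $\Ll\Mm\Ss$, by another $T$, or a syllable whose Clifford-prefix is not in $\Ll\Mm$), and apply one of the four relations of the lemma to either reduce the $T$-count or normalize the syllable and push a Clifford further right. Each rewrite either strictly decreases the $T$-count or strictly decreases some secondary measure (e.g. position of the leftmost non-normalized syllable), so the procedure terminates, and upon termination the word has the shape \eqref{eq:canonicaldef}. The main obstacle is the bookkeeping at the seam between consecutive syllables: one must verify that after commuting the $\Ss$-part of the intervening Clifford through $T$ via \eqref{eq:STrelation}, the residual $\Ll\Mm$-part $L'N'$ sitting between two $T$'s always falls into one of the cases covered by \eqref{eq:TTrelation}, \eqref{eq:TH2Trelation}, or clean concatenation — equivalently, that $\Ll\Mm T \Ll\Mm T \subseteq \Mm T \Mm \Ss \,\cup\, \Ll\Mm T \cdot(\text{stuff that pushes right})$, which is a finite check over the $4 \cdot 2 \cdot 4 \cdot 2$ possibilities but needs care to state cleanly. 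Making this seam analysis precise, and confirming that it never increases the $T$-count, is the heart of the argument.
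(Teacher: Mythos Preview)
Your proposal is correct and follows essentially the same approach as the paper's proof. Both argue by induction on the $T$-count, decompose the trailing Clifford via \cref{eq:CliffSub} as an element of $\Ll\Mm\Ss$, commute the $\Ss$-part past $T$ using \cref{eq:STrelation}, and then split the seam analysis according to whether the residual $\Ll\Mm$-part lies in $\Ll'\Mm$ (clean concatenation of a new syllable) or in $\Mm$ (where \cref{eq:TTrelation,eq:TH2Trelation} either collapse or rewrite the adjacent $T$'s). The paper packages this more compactly as a chain of set inclusions, e.g.\ $\Ll\Mm T\Ll\Mm T\Ss = \Ll\Mm T\Ll'\Mm T\Ss \cup \Ll\Mm T T\Ss \cup \Ll\Mm T H^2 T\Ss$, which avoids your $4\cdot2\cdot4\cdot2$ enumeration, but the content is the same.
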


\begin{proof}
  We first show that if $U$ is a canonical form and $A$ is one of the
  generators of the Clifford+$T$ group then $UA$ admits a canonical
  form. In the case where $A$ is a Clifford operator there is nothing
  to show so we can assume that $A$ is a $T$ gate. We now proceed by
  induction on the $T$-count of $U$.
  \begin{itemize}
  \item If $U$ has $T$-count 0 then by
    \cref{eq:CliffSub,eq:STrelation,eq:canonicaldef} we have $U
    T\in\Ll\Mm\Ss T =\Ll\Mm T \Ss\subseteq\Ll\Mm T \Clifford$ so that
    $UA$ has a canonical form of $T$-count 1.
  \item If $U$ has $T$-count 1 then by
    \cref{eq:CliffSub,eq:canonicaldef} we know that $U\in\Ll \Mm T \Ll
    \Mm \Ss$. Using \cref{eq:STrelation,eq:TTrelation,eq:TH2Trelation}
    we get
    \begin{align*}
    U T& \in\Ll \Mm T \Ll \Mm \Ss T\\
    &= \Ll \Mm T \Ll \Mm T \Ss\\
    &= \Ll \Mm T \Ll' \Mm T \Ss \cup \Ll \Mm T \Mm T \Ss\\
    &= \Ll \Mm T \Ll' \Mm T \Ss \cup \Ll \Mm T T \Ss \cup \Ll \Mm T
    H^2 T \Ss\\
    &= \Ll \Mm T \Ll' \Mm T \Ss \cup \Ll \Mm H^2 T H^2 Z \Ss \cup \Ll
    \Mm H^2 \Ss\\
    &= \Ll \Mm T \Ll' \Mm T \Ss \cup \Ll \Mm T H^2 \Ss \cup \Ll \Mm
    \Ss\\
    &\subseteq \Ll \Mm T \Ll' \Mm T \Clifford \cup \Ll \Mm T \Clifford
    \cup \Clifford.
    \end{align*}
    It follows that $UA$ has a canonical form of $T$-count 0, 1, or 2.
  \item If $U$ has $T$-count $\ell>1$ then we can use
    \cref{eq:CliffSub,eq:canonicaldef} again to write $U$ as an
    element of $\Ll\Mm T (\Ll'\Mm T)^{\ell-2}\Ll'\Mm T\Ll \Mm \Ss$. We
    can now reason as in the previous case to show that
    \[
      U T\in\Ll \Mm T (\Ll' \Mm T)^\ell\Clifford \cup \Ll \Mm T (\Ll'
      \Mm T)^{\ell-1}\Clifford \cup \Ll \Mm T (\Ll' \Mm
      T)^{\ell-2}\Clifford.
    \]
    And it follows that $UA$ has a canonical form of $T$-count
    $\ell-1$, $\ell$, or $\ell+1$.
  \end{itemize}
  Now let $U$ be a canonical form and $A$ be either a Clifford
  operator or a power $T$. Assume moreover that the $T$-count of $U$
  is $\ell$ and the $T$-count of $A$ is $k$. Then the above argument,
  together with \cref{lem:TPow}, imply that $UA$ has a canonical form
  of $T$-count at most $k+\ell$.

  To complete the proof, let $V$ be a Clifford+$T$ operator. Then
  $V=A_1 \ldots A_n$ where every $A_i$ either a Clifford operator or a
  power of $T$. Starting with the identity operator, one may then
  proceed by rightward induction on $n$ to put $V$ in canonical form.
\end{proof}

\begin{corollary}
  \label{cor:canonicalalgorithm}  
  There exists an algorithm to rewrite any Clifford+$T$ circuit into
  canonical form. The algorithm runs in time linear in the gate-count
  of the input circuit.
\end{corollary}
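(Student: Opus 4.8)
The plan is to observe that the proof of \cref{prop:canonicalexistence} is already constructive --- it processes a circuit one generator at a time, rightward, while maintaining a canonical form --- and then to argue that each such step can be carried out in constant time, so that a circuit of gate-count $n$ is handled in $O(n)$ time. I would represent the current canonical form by a leading token from $\{\epsilon, T, H^2T\}$, a stack of syllables from $\Ll'\Mm T$, and a trailing Clifford stored as one of the $648$ group elements, with group multiplication supplied by a fixed Cayley table. All the auxiliary data needed --- this Cayley table, the coset decomposition $\Clifford=\Ll\Mm\Ss$ of \cref{eq:CliffSub}, the commutation data behind \cref{eq:STrelation}, and the rewrites \cref{eq:TTrelation,eq:TH2Trelation} --- is finite and can be precomputed, so it contributes only a constant to the running time.

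Reading the input word $W_1\cdots W_n$ from left to right, I append each $W_i$ to the current canonical form using the case analysis in the proof of \cref{prop:canonicalexistence}. Appending a Clifford generator merely right-multiplies the trailing Clifford, which is one table lookup. Appending a $T$ gate is the substantive case, and the key point is \emph{locality}: by tracking which factors are actually manipulated in the displayed derivation in \cref{prop:canonicalexistence}, one sees that the relations \cref{eq:STrelation,eq:TTrelation,eq:TH2Trelation} are applied only to the suffix consisting of the last $T$-bearing block (an element of $\Ll'\Mm T$, or the leading token, or nothing, depending on the $T$-count), the trailing Clifford, and the freshly appended $T$; the rest of the canonical form is carried along untouched. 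Since the $T$-count changes by at most one, the net effect is to push at most one new syllable or pop at most one old one, together with an update of the trailing Clifford --- and a canonical form stays canonical for \emph{any} trailing Clifford, so no further propagation is triggered. Thus each step is $O(1)$ and the whole conversion is $O(n)$. If the input is allowed to contain composite gates such as arbitrary Cliffords or powers $T^a$, each is reduced in constant time to a Clifford, or a Clifford times a single $T$, via \cref{lem:TPow}, which does not affect the bound.

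I expect the main obstacle to be exactly this locality claim: convincingly verifying that appending a $T$ touches only a bounded-size suffix of the canonical form, rather than cascading leftward through the syllable stack. Dispelling the worry amounts to checking, for each of the three output shapes in the $\ell>1$ case of \cref{prop:canonicalexistence}, that it agrees with the input everywhere except in its last syllable and trailing Clifford; this is immediate once one observes that the chain of equalities in that proof rewrites only the rightmost block (the last syllable if any, the trailing Clifford, and the new $T$), and that membership of the result in \cref{eq:canonicaldef} never forces a re-examination of the already-fixed earlier syllables.
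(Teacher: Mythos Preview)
Your proposal is correct and follows essentially the same approach as the paper: observe that the proof of \cref{prop:canonicalexistence} is constructive, and that each right-multiplication step modifies only a bounded suffix of the current canonical form, yielding $O(1)$ work per gate and $O(n)$ overall. The paper compresses this into a couple of sentences (noting that ``at most six of the rightmost operators'' are updated per step), whereas you spell out the data structures and the locality argument more explicitly; but the substance is the same.
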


\begin{proof}
  This is a consequence of the constructive proof of
  \cref{prop:canonicalexistence}. Indeed, a constant number of
  operations are needed to update the at most six of rightmost
  operators of a canonical form upon right-multiplication by a
  Clifford+$T$ operator. Any Clifford+$T$ operator of length $n$ can
  therefore be put in canonical form in $\Oo(n)$ steps.
\end{proof}

\begin{remark}
  \label{rmk:TMincanonical}
  Suppose that $V$ is a Clifford+$T$ circuit for some operator $U$ and
  that $V'$ is the canonical form for $U$ obtained by applying
  \cref{cor:canonicalalgorithm} to $V$. If $\ell$ is the $T$-count of
  $V$ and $\ell'$ is the $T$-count of $V'$ then $\ell' \leq
  \ell$. This follows from the fact that the algorithm of
  \cref{cor:canonicalalgorithm} never increases the $T$-count of a
  circuit.
\end{remark}

We close this section by discussing an alternative canonical form for
single-qutrit Clifford+$T$ circuits. The canonical form of
\cref{def:canonical} is inspired by the one introduced by Matsumoto
and Amano in \cite{MA08} for single-qubit Clifford+$T$ circuits. In
\cite{FGKM15}, Forest and others introduced a channel representation
for single qubit Clifford-cyclotomic circuits. When restricted to
single Clifford+$T$ circuits their channel representation can be
interpreted as a sequence of $\pi/4$ rotations about the $x$-, $y$-,
or $z$-axes of the Bloch Sphere (followed by a single Clifford
operator). This sequence is subject to the condition that consecutive
rotations revolve around different axes. Below, we define an analogue
for single-qutrit Clifford+$T$ circuits.

\begin{definition}
  \label{def:tpgates}
  Let $P$ be a Pauli operator and let $\lambda_0$, $\lambda_1$, and
  $\lambda_2$ be the following real numbers:
  \[
  \lambda_0 := \frac{1+ \zeta + \zeta^8}{3}, \quad \lambda_1 :=
  \frac{1+ \zeta^2 + \zeta^7}{3}, \quad \mbox{and} \quad \lambda_2 :=
  \frac{1+ \zeta^4 + \zeta^5}{3}.
  \]
  Then the \emph{$P$-axis $T$ gate} \emph{$T_P$} is defined as $T_P :=
  \lambda_0I + \lambda_1P + \lambda_2P^2$.
\end{definition}

\begin{definition}
  A \emph{channel form} is a single-qutrit Clifford+$T$ circuit of the
  form
  \begin{align}
  T_{P_1^{n_1}} T_{P_2^{n_2}}\ldots T_{P_\ell^{n_\ell}} C
  \end{align}
  where $\ell\in\N$, $P_i\in\s{Z,X,XZ,XZ^2}$,
  $n_i\in\Z_3\backslash\{1\}$, $P_i \neq P_{i+1}$ and $C\in\Clifford$.
\end{definition}

It can be shown that channel forms are in bijective correspondence
with the canonical forms of \cref{def:canonical} so that every
Clifford+$T$ operator admits a channel form. Moreover, this
correspondence preserves the $T$-count.

\section{Uniqueness of canonical forms}
\label{sec:uniqueness}

\cref{prop:canonicalexistence} showed that every Clifford+$T$ operator
can be represented by a circuit in canonical form. In this section, we
show that this representation is unique in the sense that if $M$ and
$N$ are different canonical forms that $M$ and $N$ represent different
Clifford+$T$ operators.

\subsection{Algebraic preliminaries}

\begin{definition}
  \label{def:rings}
  Let $\alpha = \sin(2\pi/9)$ and $\gamma=1-\zeta$, where $\zeta =
  e^{2\pi i/9}$ as in \cref{sec:ct}. We define six extensions of $\Z$.
\begin{itemize}
  \item $\Z[\omega]=\left\{a+b\omega\;\middle|\;a,b\in\Z\right\}$
  \item $\Z[\zeta]=\left\{a+b\zeta+c\zeta^2+d
    \zeta^3+e\zeta^4+f\zeta^5\;\middle|\;a, b,c,d,e,f\in\Z\right\}$
  \item $\Z[\frac{1}{\gamma}]= \left\{\frac{A}{\gamma^k}\;\middle|\;A
    \in\Z[\zeta],k\in\N\right\}$
  \item $\Z[\frac{1}{2}] =
    \left\{\frac{a}{2^k}\;\middle|\;a\in\Z,k\in\N\right\}$
  \item $\Z[\alpha]= \left\{A+B\alpha+C\alpha^2+D\alpha^3+E\alpha^4 +
    F\alpha^5\;\middle|\;A,B,C,D,E,F\in\Z[\frac{1}{2}]\right\}$
  \item $\Z[\frac{1}{2},\frac{1}{\alpha}]=
    \left\{\frac{A}{\alpha^k}\;\middle|\;A\in\Z[\alpha],
    k\in\N\right\}$
  \end{itemize}
  The ring $\Z[\omega]$ is known as the ring of \emph{cyclotomic
    integers of degree 3} while the ring $\Z[\zeta]$ is known as the
  ring of \emph{cyclotomic integers of degree 9}. The ring
  $\Z[\frac{1}{2}]$ is known as the ring of \emph{dyadic fractions}.
\end{definition}
    
\begin{remark}
  \label{rmk:AlphaRels}
  We record here some important relations involving $\alpha$ which
  will be useful in what follows.
  \begin{itemize}
  \item $\frac{\sqrt{3}}{2}=3\alpha-4\alpha^3$
  \item $\sqrt{3}=6\alpha-8\alpha^3$
  \item $3=36\alpha^2-96\alpha^4+64\alpha^6$
  \item $\alpha^6=\frac{3}{2^6}-
    \frac{9}{2^4}\alpha^2+\frac{3}{2}\alpha^6$
  \item $\alpha=\frac{\frac{3}{2^6}}{\alpha^5}-
    \frac{\frac{9}{2^4}}{\alpha^3}+\frac{\frac{3}{2}}{\alpha}$
  \item $\frac{1}{2}=-1+18\alpha^2-48\alpha^4+32\alpha^6$
  \end{itemize}
  In particular, the fifth relation implies that $\Z[\alpha]$ is a
  subring of $\Z\left[\frac{1}{2},\frac{1}{\alpha}\right]$.
\end{remark}

\begin{remark}
  The entries of any Clifford+$T$ operator belong to the ring
  $\Z[\frac{1}{\gamma}]$. To see this, note that it holds for the
  generators $S$, $H$, and $T$, since
  \[
  \frac{1}{\sqrt{-3}} = \frac{1}{\gamma^3} (-1+\zeta-\zeta^2-\zeta^3-
  2\zeta^4+2\zeta^5)\in\Z[\frac{1}{\gamma}].
  \]
\end{remark}

\begin{definition}[Residue]
  The \emph{residue map} $\rho$ is the ring homomorphism
  $\rho:\Z[\alpha]\rightarrow\Z_3$ defined by $\rho(q)=q\mod \alpha$.
\end{definition}

It follows from \cref{rmk:AlphaRels} that $\rho(\sqrt{3})=0$,
$\rho(3)=0$, and $\rho(\frac{1}{2})=2$. These equalities give an
intuition of how one might compute $\rho(q)$ given $q\in\Z[\alpha]$.
First write $q$ as a sum $q=c_0\alpha^0 + \ldots + c_5\alpha^5$ with
each $c_j\in\Z[\frac{1}{2}]$ such that $c_j=\frac{a_j}{2^{b_j}}$ where
$a_j\in\Z$ and $b_j\in\N$. Then $\rho(q)=\rho(c_0)=\rho(a_0 /
2^{b_0})=(2^{b_0}a_0)\mod 3.$

\begin{definition}[Denominator Exponent]
  For every $q\in\Z[\frac{1}{2},\frac{1}{\alpha}]$, there exists some
  $k\geq 0$ such that $\alpha^k q\in\Z[\alpha]$. Such a $k$ is called
  a \emph{denominator exponent} of $q$, and the least such $k$ is
  called the \emph{least denominator exponent} (LDE) of $q$. An
  integer $k\geq 0$ is a denominator exponent of a vector or matrix
  with entries in $\Z[\frac{1}{2},\frac{1}{\alpha}]$ if $k$ is a
  denominator exponent of every entry in the vector or matrix. The
  least such $k$ is the least denominator exponent of the vector or
  matrix.
\end{definition}

\begin{definition}[$k$-Residue]
  Let $q\in\Z[\frac{1}{2},\frac{1}{\alpha}]$ and let $k$ be a
  denominator exponent of $q$. Then the \emph{$k$-residue} of $q$,
  $\rho_k(q)$ and is defined as $\rho_k(q)=\rho(\alpha^k q) \in
  \Z_3$. The $k$-residue of a vector or matrix is defined
  component-wise.
\end{definition}	

\begin{lemma}
  \label{lem:LDEcond}
  Let $q\in\Z[\frac{1}{2},\frac{1}{\alpha}]$ and let $k\in\N$ be a
  denominator exponent of $q$. Then $k$ is the LDE of $q$ if and only
  if $\rho_k(q)\neq0\mod3$ or $k=0$.
\end{lemma}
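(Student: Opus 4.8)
The plan is to reduce the lemma to a single divisibility fact about the ring $\Z[\alpha]$: for $p\in\Z[\alpha]$, one has $\rho(p)=0$ in $\Z_3$ if and only if $\alpha^{-1}p\in\Z[\alpha]$, where the division is carried out in $\Z[\tfrac12,\tfrac1\alpha]$ (in which $\alpha$ is a unit, and which contains $\Z[\alpha]$ by the fifth relation of \cref{rmk:AlphaRels}). The ``if'' direction is immediate: if $\alpha^{-1}p=r\in\Z[\alpha]$ then $p=\alpha r$, so $\rho(p)=\rho(\alpha)\rho(r)=0$ since $\rho(\alpha)=\alpha\bmod\alpha=0$. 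For the ``only if'' direction I would use that, by definition, $\rho$ is reduction modulo $\alpha$, so its kernel is exactly the principal ideal $(\alpha)\subseteq\Z[\alpha]$; hence $\rho(p)=0$ gives $p=\alpha r$ for some $r\in\Z[\alpha]$, and therefore $\alpha^{-1}p=r\in\Z[\alpha]$.

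Granting this fact, the lemma follows quickly. If $k=0$ there is nothing to prove, since a denominator exponent is by definition nonnegative, so $0$ is automatically the LDE and the right-hand condition holds because $k=0$. Assume then $k\ge1$; I must show that $k$ is the LDE of $q$ if and only if $\rho_k(q)\neq0$, and I would argue both directions by contraposition. First, suppose $\rho_k(q)=0$. Since $k$ is a denominator exponent, $p:=\alpha^k q\in\Z[\alpha]$, and $\rho(p)=\rho_k(q)=0$, so by the divisibility fact $\alpha^{k-1}q=\alpha^{-1}p\in\Z[\alpha]$; as $k-1\ge0$, this exhibits $k-1$ as a denominator exponent of $q$, so $k$ is not the LDE. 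Conversely, suppose $k$ is not the LDE, and let $k'<k$ be the LDE, so that $\alpha^{k'}q\in\Z[\alpha]$. Then $\alpha^k q=\alpha\cdot\bigl(\alpha^{k-k'-1}(\alpha^{k'}q)\bigr)$ with $\alpha^{k-k'-1}(\alpha^{k'}q)\in\Z[\alpha]$ since $k-k'-1\ge0$, so $\rho_k(q)=\rho(\alpha^k q)=0$. This completes the proof.

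The only point requiring care is the claim that $\ker\rho=(\alpha)$, i.e.\ that ``reduction mod $\alpha$'' is a genuine quotient map $\Z[\alpha]\to\Z[\alpha]/(\alpha)\cong\Z_3$. This is where the relations of \cref{rmk:AlphaRels} enter: the identity $3=36\alpha^2-96\alpha^4+64\alpha^6$ shows $3\in(\alpha)$, and combined with the fact that $\alpha$ satisfies a monic polynomial over $\Z[\tfrac12]$ whose constant term equals $3$ up to a unit of $\Z[\tfrac12]$ (obtained by rescaling $64x^6-96x^4+36x^2-3$), one gets $\Z[\alpha]/(\alpha)\cong\Z[\tfrac12]/3\Z[\tfrac12]\cong\Z_3$, so that $\ker\rho$ is exactly $(\alpha)$ and not some strictly larger maximal ideal. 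I expect verifying this structural statement about $\Z[\alpha]$ — rather than the short deduction of the lemma from it — to be the main obstacle; everything else is routine bookkeeping with denominator exponents.
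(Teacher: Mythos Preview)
Your argument is correct and is organized somewhat differently from the paper's. The paper works in coordinates: it writes $\alpha^k q=\sum_{j=0}^{5}c_j\alpha^j$ with $c_j\in\Z[\tfrac12]$, identifies $\rho_k(q)$ with $\rho(c_0)$, and then checks by hand that $\alpha^{-1}c_0\in\Z[\alpha]$ if and only if $3\mid c_0$ in $\Z[\tfrac12]$, using the explicit identity $3/\alpha=36\alpha-96\alpha^3+64\alpha^5$ for one direction and the bare assertion that $r/\alpha\notin\Z[\alpha]$ for $r\in\{1,2\}$ for the other. You instead isolate the single structural fact $\Z[\alpha]/(\alpha)\cong\Z_3$ (equivalently $\ker\rho=(\alpha)$) and derive both directions of the divisibility criterion and the denominator-exponent bookkeeping from it. The two approaches rest on the same arithmetic input---the relation $3=36\alpha^2-96\alpha^4+64\alpha^6$, which puts $3$ in $(\alpha)$---but you package it once as a quotient-ring computation, whereas the paper unpacks it into explicit manipulations. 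Your version is cleaner and makes transparent exactly what needs to be verified; the paper's version is more concrete but leaves the step ``$1/\alpha\notin\Z[\alpha]$'' to the reader, which in the end amounts to the same quotient computation you carry out.
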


\begin{proof}
  If $k=0$ then $k$ is a denominator exponent of $q$ if and only if it
  is the LDE of $k$. So suppose that $k>0$. Since $k$ is a denominator
  exponent of $q$ we can write $q$ as
  \[
  q=\frac{1}{\alpha^k}\sum_{j=0}^5 c_j\alpha^j
  \]
  with each $c_j\in\Z[\frac{1}{2}]$. Note that
  $\rho_k(q)=\rho(c_0)$. Since $k$ is not the LDE of $q$, we can
  rewrite $q$ as
  \[
  q=\frac{1}{\alpha^{k-1}}\left[\alpha^{-1}c_0+\sum_{j=0}^4
    c_{j+1}\alpha^j\right]
  \]
  where it must be the case that $\alpha^{-1}c_0\in\Z[\alpha]$. If
  $\rho(c_0)=0\mod3$, then we can write $c_0=3c_0'$ for some
  $c_0'\in\Z[\frac{1}{2}]$ and have
  \[
  \alpha^{-1}c_0=c_0'\frac{3}{\alpha} =
  c_0'\left(36\alpha-96\alpha^3+64\alpha^5\right)\in\Z[\alpha]
  \]
  where the term $\frac{3}{\alpha}$ is simplified using
  \cref{rmk:AlphaRels}. This proves the ``only if'' direction. On the
  other hand, if $\rho(c_0)=r\neq0\mod3$, then we have $c_0=r+3c_0''$
  for some $c_0''\in\Z[\frac{1}{2}]$ and $r\in\{1,2\}$ and can write
  \[
  \alpha^{-1}c_0 = \frac{r}{\alpha}+c_0''\frac{3}{\alpha} =
  \frac{r}{\alpha}+c_0'\left(36\alpha-96\alpha^3+64\alpha^5\right).
  \]
  The second term is in $\Z[\alpha]$, and so
  $\alpha^{-1}c_0\in\Z[\alpha]$ would only hold in this case if
  $\frac{r}{\alpha}$ is in $\Z[\alpha]$ as well. For $r\in\{1,2\}$
  this is not the case, leading to a contradiction and proving the
  ``if'' direction.
\end{proof}

\begin{remark}
  \label{rmk:ABrho}
  Let $A$ and $B$ be two matrices over
  $\Z[\frac{1}{2},\frac{1}{\alpha}]$ with LDE $k_A$ and $k_B$
  respectively. Then if $k\geq k_A$ and $k\geq k_B$ we have
  $\rho_k(A+B)=\rho_k(A)+\rho_k(B)$. Similarly, if $k_1\geq k_A$,
  $k_2\geq k_B$ and $k'=k_1+k_2$ such that $k_1\geq k_A$ and $k_2\geq
  k_B$ then $\rho_{k'}(AB) =
  \rho_{k_1}(A)\cdot\rho_{k_2}(B)$. Furthermore, if $A$ has the
  property that $k_A=0$ and that $\frac{1}{\alpha}A$ has entries in
  $\Z[\alpha]$, then
  \[
  \rho_{k'}(AB) =
  \rho_0\left(\frac{1}{\alpha}A\right)\cdot\rho_{k'+1}(B)
  \]
  for any $k'\geq k_B$. Likewise, if $k_B=0$ and $(1/\alpha)B$ has
  entries in $\Z[\alpha]$, then $\rho_{k'}(A B) =
  \rho_{k'+1}(A)\cdot\rho_0((1/\alpha) B)$. Finally, if $\ell >k_A$
  then $\rho_{\ell}(A)=0_{m\times n}$ by \cref{lem:LDEcond}.
\end{remark}

\subsection{The adjoint representation}

We will use an alternative representation for Clifford+$T$
operators. Let $P$ be a Pauli operator and define the operators $P_+$
and $P_-$ as
\[
  P_+:=\frac{1}{\sqrt{\tr\left[ \left(P+P^\dagger\right)^2\right]}}
  (P+P^\dagger) \quad \mbox{and} \quad P_-:=\frac{i}{\sqrt{-\tr\left[
        \left(P-P^\dagger\right)^2\right]}}(P-P^\dagger).
\]
Now consider the sets $\Qq=\{1/\sqrt{3},Z_+,X_+,(XZ)_+,
(XZ^2)_+,Z_-,X_-,(XZ)_-,(XZ^2)_-\}$ and $\Qq'=\Qq\backslash
\{1/\sqrt{3}\}$ where $1/\sqrt{3}=1/\sqrt{3}\Id$. These sets have a
number of properties:
\begin{itemize}
  \item The set $\Qq$ is a complete orthonormal basis for the set
    $\Matrices_3(\C)$ of $3\times3$ complex matrices with respect to
    the following inner product $\langle A,B\rangle =\langle
    B,A\rangle^* =\tr(A B^\dagger)$. That is, if $Q_i,Q_j\in\Qq$ then
    $\langle Q_i,Q_j\rangle=\delta_{i,j}$.
  \item Every $Q\in\Qq$ is Hermitian
  \item Every $Q\in\Qq'$ is traceless.
  \item Every $Q\in\Qq'$ is of one of two forms: either the matrix $Q$
    is of the type $P_+$ and is such that $P$ is a Pauli matrix with
    $P_+=\frac{1}{\sqrt{6}}(P+P^2)$ or it is of the type $P_-$ and is
    such that $P$ is a Pauli matrix with
    $P_-=\frac{i}{\sqrt{6}}(P-P^2)$.
\end{itemize}
Note that the set $\Qq$, much like the set of Gell-Mann matrices, does
\emph{not} form a group.

Because every element of $\Qq$ is Hermitian, any unit trace $3\times3$
Hermitian matrix $\rho$ may be written as
\begin{align}
  \label{eq:rhodef}
  \rho=\frac{1}{3}\Id+\frac{1}{\sqrt{3}}\sum_{Q_i\in\Qq'}c_i Q_i
\end{align}
with $c_i\in\R$. Conjugation by a unitary operator $U$ preserves the
trace of $\rho$. The action of $U$ will therefore send each $c_i$ to
some $c_i'\in\R$, since $\rho'=U\rho U^\dagger$ is still
Hermitian. This encourages us to define an adjoint representation for
unitary operators using $\Qq'$.

\begin{definition}
  Let $\hat{\rho}$ and $\hat{\rho}'$ be the eight-component vectors
  composed of the real $c_i$ and $c_i'$ as in \cref{eq:rhodef} for the
  $3\times3$ Hermitian matrices $\rho$ and $\rho'$. Then the adjoint
  representation of a unitary operator $U$, denoted $\hat{U}$, is
  defined by
  \[
  \hat{U}\hat{\rho}=\hat{\rho}'\iff U \rho U^\dagger = \rho'.
  \]
\end{definition}

\begin{remark}
  Composition of operators in the adjoint basis is equivalent to
  matrix multiplication, which can be seen as follows. If
  \[
  \hat{U}_1\hat{\rho}=\hat{\rho}'\iff U_1 \rho U_1^\dagger = \rho',
  \]
  then we have
  \[
  \hat{U}_2\hat{\rho}'=(\hat{U}_2\hat{U}_1)\hat{\rho}\iff U_2
  \rho'U_2^\dagger=U_2 U_1 \rho U_1^\dagger U_2^\dagger=(U_2
  U_1)\rho(U_2 U_1)^\dagger
  \]
\end{remark}

To maintain consistency, we impose an ordering for the $c_i$ mirroring
the ordering of
\[
\Qq'=\{Z_+,X_+,(XZ)_+,(XZ^2)_+,Z_-,X_-,(XZ)_-,(XZ^2)_-\}.
\]
This ordering allows us to write an explicit definition for $\hat{U}$
using our inner product.
\begin{lemma}
  Let $Q_i\in\Qq'$ with the ordering of $\Qq'$ fixed as above and
  $Q_i$ the $i$th element of $\Qq'$. The adjoint representation
  $\hat{U}$ of a Unitary operator $U$ may be calculated
  \[
  \hat{U}_{i,j}=\langle Q_i, U Q_j U^\dagger \rangle=\tr\left[Q_i U
    Q_j U^\dagger\right]
  \]
\end{lemma}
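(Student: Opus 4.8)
The plan is to unwind the definition of $\hat U$ by expanding both $\rho$ and $\rho'$ in the basis $\Qq$ and reading off the coefficients. First I would fix a Hermitian unit-trace matrix $\rho$ and write it as in \cref{eq:rhodef}, namely $\rho=\frac{1}{3}\Id+\frac{1}{\sqrt{3}}\sum_{Q_j\in\Qq'}c_j Q_j$, so that by definition $\hat{\rho}$ is the vector $(c_1,\dots,c_8)^{\Trans}$ (with respect to the fixed ordering of $\Qq'$). Conjugating by $U$ and using $U\Id U^\dagger=\Id$ gives
\[
\rho'=U\rho U^\dagger=\frac{1}{3}\Id+\frac{1}{\sqrt{3}}\sum_{Q_j\in\Qq'}c_j\,U Q_j U^\dagger.
\]
Since $U$ is unitary, $\rho'$ is again Hermitian with unit trace, so it likewise admits an expansion $\rho'=\frac{1}{3}\Id+\frac{1}{\sqrt{3}}\sum_{Q_i\in\Qq'}c_i' Q_i$, and $\hat{\rho}'=(c_1',\dots,c_8')^{\Trans}$.

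Next I would extract $c_i'$ using the orthonormality of $\Qq$. Pairing each of the two expressions for $\rho'$ against $Q_i\in\Qq'$, and recalling that every element of $\Qq'$ is traceless (so $\langle Q_i,\Id\rangle=\tr(Q_i)=0$) and that distinct elements of $\Qq$ are orthonormal, I obtain on the one hand $\langle Q_i,\rho'\rangle=\tfrac{1}{\sqrt{3}}c_i'$ and on the other hand $\langle Q_i,\rho'\rangle=\tfrac{1}{\sqrt{3}}\sum_{j}c_j\,\langle Q_i,U Q_j U^\dagger\rangle$. Equating these yields $c_i'=\sum_{j}\langle Q_i,U Q_j U^\dagger\rangle\,c_j$. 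Comparing with the defining relation $\hat{\rho}'=\hat{U}\hat{\rho}$ and noting that the vectors $\hat{\rho}$ obtained from Hermitian unit-trace matrices span all of $\R^8$ (because $\Qq'$ is a basis for the space of traceless Hermitian $3\times3$ matrices), the matrix $\hat{U}$ is uniquely determined and its entries are $\hat{U}_{i,j}=\langle Q_i,U Q_j U^\dagger\rangle$. The second equality then follows from Hermiticity of $Q_j$: $\langle Q_i,U Q_j U^\dagger\rangle=\tr\!\big(Q_i(U Q_j U^\dagger)^\dagger\big)=\tr\!\big(Q_i U Q_j^\dagger U^\dagger\big)=\tr\!\big(Q_i U Q_j U^\dagger\big)$.

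There is essentially no difficult step here; the only point requiring a moment's care is the well-definedness of $\hat{U}$ — that the linear relation $c_i'=\sum_j\langle Q_i,U Q_j U^\dagger\rangle c_j$ genuinely pins down the matrix rather than merely being consistent with it. This is handled by the spanning remark above, or, equivalently, one may simply take the displayed formula as the definition of the matrix and verify that it satisfies $\hat{U}\hat{\rho}=\hat{\rho}'$ for every Hermitian unit-trace $\rho$, which is exactly the computation carried out in the previous paragraph.
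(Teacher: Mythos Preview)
Your proof is correct and follows the same approach as the paper, which simply states that the result follows directly from the orthonormality and Hermiticity of the $Q_i$. You have merely made explicit the computation that the paper leaves to the reader.
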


\begin{proof}
  This follows directly from the orthonormality and Hermiticity of the
  $Q_i$.
\end{proof}

\begin{lemma}
  Every adjoint representation $\hat{U}$ of a unitary operator $U$ is
  a real, special orthogonal matrix.
\end{lemma}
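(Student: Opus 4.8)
The plan is to prove the statement by first showing that $\hat{U}$ is a real matrix, then showing it is orthogonal, and finally checking the determinant condition that pins it down to the special orthogonal group. For reality: since every $Q_i \in \Qq'$ is Hermitian and $U\rho U^\dagger$ is Hermitian whenever $\rho$ is, the real expansion coefficients $c_i$ in \cref{eq:rhodef} are sent to real coefficients $c_i'$. Equivalently, using the explicit formula $\hat{U}_{i,j} = \tr[Q_i U Q_j U^\dagger]$, one observes that $(\tr[Q_i U Q_j U^\dagger])^* = \tr[(Q_i U Q_j U^\dagger)^\dagger] = \tr[U Q_j^\dagger U^\dagger Q_i^\dagger] = \tr[U Q_j U^\dagger Q_i] = \tr[Q_i U Q_j U^\dagger]$, using Hermiticity of $Q_i$ and $Q_j$ together with cyclicity of the trace; hence each entry equals its own conjugate and is real.

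Next I would establish orthogonality. The cleanest route is to use that conjugation by $U$ is an inner-product-preserving linear map on $\Matrices_3(\C)$: $\langle U A U^\dagger, U B U^\dagger\rangle = \tr(U A U^\dagger U B^\dagger U^\dagger) = \tr(A B^\dagger) = \langle A, B\rangle$. Since $\Qq$ is an orthonormal basis and conjugation by $U$ fixes $\Id$ (hence fixes the $1/\sqrt{3}$ component), the map restricts to an isometry of the real span of $\Qq'$ equipped with the inner product $\langle A, B\rangle$ (which is real-valued and symmetric on Hermitian matrices). An isometry represented in an orthonormal basis is an orthogonal matrix, so $\hat{U}^\Trans \hat{U} = \Id$. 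Alternatively, one computes $(\hat{U}^\Trans \hat{U})_{j,k} = \sum_i \hat{U}_{i,j}\hat{U}_{i,k} = \sum_i \langle Q_i, U Q_j U^\dagger\rangle \langle Q_i, U Q_k U^\dagger\rangle$, and since $\{Q_i\} \cup \{1/\sqrt{3}\}$ is a complete orthonormal basis and $U Q_j U^\dagger$ is orthogonal to $1/\sqrt{3}$ (being traceless), Parseval's identity gives $\sum_i \langle Q_i, U Q_j U^\dagger\rangle \langle Q_i, U Q_k U^\dagger\rangle = \langle U Q_j U^\dagger, U Q_k U^\dagger\rangle = \langle Q_j, Q_k\rangle = \delta_{j,k}$.

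For the determinant, since $\hat{U}$ is orthogonal its determinant is $\pm 1$, so it suffices to rule out $-1$. I would use a continuity/connectedness argument: the map $U \mapsto \hat{U}$ is continuous, $\widehat{\Id}$ is the identity matrix (determinant $+1$), and $SU(3)$ is path-connected, so $\det \hat{U}$ is a continuous $\{\pm 1\}$-valued function on a connected space and must be constantly $+1$. One should note the representation is unchanged under multiplying $U$ by a global phase, so it is harmless to assume $U \in SU(3)$ for this step; in particular every Clifford+$T$ operator, having determinant $1$ by the normalization chosen in \cref{sec:ct}, already lies in $SU(3)$.

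The main obstacle is the determinant step. Reality and orthogonality are essentially bookkeeping with the inner product and Hermiticity, but showing $\det \hat{U} = +1$ rather than $-1$ requires either the topological argument above (which needs the cleanly-stated facts that $\hat{(\cdot)}$ is continuous and $SU(3)$ is connected) or an explicit computation of $\det \hat{H}$ and $\det \hat{S}$ and $\det \hat{T}$ on generators together with multiplicativity — the latter being concrete but requiring one to actually write down the $8\times 8$ matrices. I expect the connectedness argument to be the intended one, and I would present it, flagging that it relies on the standard fact that the special unitary group is connected.
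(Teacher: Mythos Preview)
Your proof is correct. The reality and orthogonality arguments match the paper's closely: the paper also uses cyclicity of the trace for reality, and for orthogonality it computes $\hat{(U^\dagger)}_{i,j}=\hat{U}_{j,i}$ directly rather than invoking Parseval, but this is the same idea in slightly different packaging.

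The determinant step is where you genuinely diverge. You argue by connectedness: $U\mapsto\hat{U}$ is continuous, $\det\hat{U}\in\{\pm1\}$, and $SU(3)$ (or $U(3)$) is connected with $\hat{\Id}=\Id$, so $\det\hat{U}\equiv+1$. The paper instead writes $U=\exp(-iA)$, Trotterizes as $U=\lim_{N\to\infty}[\exp(-iA/N)]^N$, and computes $\det\hat{U}$ via $\exp(N\,\tr\log(\cdot))$, using the Hadamard lemma to expand $\hat{V}(A/N)$ to first order in $1/N$ and observing that the $O(1/N)$ term $\sum_j\tr(Q_j[A,Q_j])$ vanishes by cyclicity. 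Your topological argument is shorter and avoids the analytic machinery (matrix logarithms, Trotter limits, order-by-order expansions); the paper's infinitesimal computation is more hands-on and makes the Lie-algebraic reason for $\det=1$ explicit, at the cost of more bookkeeping. Incidentally, your guess that connectedness was ``the intended one'' turns out to be wrong---the paper takes the Trotter route---but your version is arguably the cleaner proof.
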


\begin{proof}
  Every element of $\hat{U}_{i,j}$ is real due to the properties of
  our inner product and the cyclic properties of the trace:
  \[
  \hat{U}_{i,j}^*=\langle Q_i, U Q_j U^\dagger \rangle^*=\langle U Q_j
  U^\dagger, Q_i \rangle=\tr\left[U Q_j U^\dagger Q_i
    \right]=\tr\left[Q_i U Q_j U^\dagger \right]=\langle Q_i, U Q_j
  U^\dagger \rangle=\hat{U}_{i,j}
  \]
  To show that every $\hat{U}$ is orthogonal, consider the inverse of
  $U$. As $U$ is unitary, the inverse of $U$ is $U^\dagger$ and thus
  the inverse of $\hat{U}$ must be
  $\hat{\left(U^\dagger\right)}$. This gives
  \[
  \hat{U}_{i,j}^{-1}=\hat{\left(U^\dagger\right)}_{i,j}=\langle Q_i,
  U^\dagger Q_j U \rangle = \tr\left[Q_i U^\dagger Q_j U \right] =
  \tr\left[Q_j U Q_i U^\dagger \right] = \langle Q_j, U Q_i U^\dagger
  \rangle = \hat{U}_{j,i} = \hat{U}^\Trans_{i,j}
  \]
  It just remains to show that $\det\hat{U}=1$. Consider again the
  matrix $U$. As it is unitary, we know there is a Hermitian operator
  $A$ such that $U=\exp(-i A)$. We begin by examining the Trotter
  decomposition of $U$:
  \[
  U=\lim_{N\rightarrow\infty}
  \left[\exp\left(\frac{-i}{N}A\right)\right]^N.
  \]
  This in turn implies that we have
  \[
  \hat{U}=\lim_{N\rightarrow\infty} \left[
    \hat{V}\left(\frac{A}{N}\right)\right]^N
  \]
  where we have defined $\hat{V}(M)$ as the adjoint representation for
  the operator $\exp(-iM)$. From this form, we calculate
  $\det\hat{U}$:
  \begin{align*}
  \det\hat{U}&=\lim_{N\rightarrow\infty}
  \left[\det\left(\hat{V}\left(\frac{A}{N}\right)\right)\right]^N
  \\ &= \lim_{N\rightarrow\infty} \exp\left[N\tr\left[\log
      \left(\hat{V}\left(\frac{A}{N}\right)\right)\right]\right]
  \end{align*}
  To lowest orders in $\frac{1}{N}$, we compute
  $\hat{V}\left(\frac{A}{N}\right)$ using the Hadamard lemma:
  \[
  \left(\hat{V}\left(\frac{A}{N}\right)\right)_{j,k} =
  \tr\left[Q_j\exp\left(\frac{-i}{N}A\right)
    Q_k\exp\left(\frac{i}{N}A\right)\right] = \tr\left[Q_j Q_k\right]
  -\frac{i}{N}\tr \left[Q_j\left[A,Q_k\right]\right]+
  \Oo\left(\left(\frac{1}{N}\right)^2\right)
  \]
  The first term here is simply the elemental form of the identity
  $\delta_{j,k}$ and the second term makes use of the standard
  definition of an algebra commutator, $[A,B]=AB-BA$. Upon calculating
  the trace of the matrix logarithm of
  $\hat{V}\left(\frac{A}{N}\right)$ we have
  \[
  \tr\left[\log\left(\hat{V}\left(\frac{A}{N} \right)\right) \right] =
  \frac{-i}{N}\sum_j\tr\left[Q_j \left[A,Q_j\right]\right] +
  \Oo\left(\left(\frac{1}{N}\right)^2\right) =
  \Oo\left(\left(\frac{1}{N}\right)^2\right)
  \]
  with the leading term disappearing due to the cyclic properties of
  the trace. Finally, this yields the desired result:
  \[
  \det\hat{U}=\lim_{N\rightarrow\infty}
  \exp\left[N\cdot\Oo\left(\left(\frac{1}{N}\right)^2\right)\right] =
  \lim_{N\rightarrow\infty}1+\Oo\left(\frac{1}{N}\right)=1
  \]
\end{proof}	

\begin{definition}[Quadrants]
  Let $\hat{U}$ be the adjoint representation of a unitary operator
  $U$. Let $\Qq'_+=\{Z_+,X_+,(XZ)_+,(XZ^2)_+\}$ and
  $\Qq'_-=\{Z_-,X_-,(XZ)_-,(XZ^2)_-\}$ be two sets with ordering as
  specified. Let $Q_{i,+}\in\Qq'_+$ be the $i$th element of $\Qq'_+$
  and $Q_{i,-}\in\Qq'_-$ be the $i$th element of $\Qq'_-$. Then we
  define the four $4\times4$ \emph{quadrants} of $\hat{U}$ (starting
  from the upper-left quadrant and going counter-clockwise) as
  follows:
  \begin{align*}
    \left(\hat{U}_{++}\right)_{i,j}&=\langle Q_{i,+}, U Q_{j,+}
    U^\dagger \rangle\\ \left(\hat{U}_{-+}\right)_{i,j}&=\langle
    Q_{i,-}, U Q_{j,+} U^\dagger
    \rangle\\ \left(\hat{U}_{--}\right)_{i,j}&=\langle Q_{i,-}, U
    Q_{j,-} U^\dagger
    \rangle\\ \left(\hat{U}_{+-}\right)_{i,j}&=\langle Q_{i,+}, U
    Q_{j,-} U^\dagger \rangle
  \end{align*}	
\end{definition}

\begin{lemma}
  Every adjoint representation $\hat{D}$ of a diagonal unitary
  operator $D$ is symplectic.
\end{lemma}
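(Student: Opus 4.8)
The plan is to produce a complex structure $J$ on the eight‑dimensional real space $V=\operatorname{span}_{\R}(\Qq')$ with two features: first, the symplectic form against which $\hat D$ is to be checked is $\Omega(u,v)=\langle Ju,v\rangle$, which in the ordered basis $\Qq'$ (the four $Q_{i,+}$ followed by the four $Q_{i,-}$) is the standard block matrix $\left[\begin{smallmatrix}0&I_4\\-I_4&0\end{smallmatrix}\right]$; second, $\hat D$ commutes with $J$. Granting these, we are done at once: by the previous lemma $\hat D$ is (special) orthogonal, so $\Omega(\hat D u,\hat D v)=\langle J\hat D u,\hat D v\rangle=\langle \hat D Ju,\hat D v\rangle=\langle Ju,v\rangle=\Omega(u,v)$. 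Equivalently, orthogonality together with $\hat D J=J\hat D$ places $\hat D$ in $\mathrm U(4)\subseteq\mathrm{Sp}(8)$. So the entire content is to define $J$ and verify the commutation.

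Define $J$ on the basis $\Qq'$ by $J(Q_{i,+})=Q_{i,-}$ and $J(Q_{i,-})=-Q_{i,+}$, so $J^2=-\mathrm{id}$; concretely $J(P_+)=P_-$ and $J(P_-)=-P_+$ for each Pauli line $P\in\{Z,X,XZ,XZ^2\}$. The key observation is that on the six‑dimensional subspace $V_1$ spanned by the $P_\pm$ with $P\in\{X,XZ,XZ^2\}$ — which is exactly the space of off‑diagonal Hermitian matrices — the map $J$ is implemented by conjugations. Using the Weyl relation $ZXZ^{-1}=\omega X$ (and likewise $Z(XZ)Z^{-1}=\omega XZ$, $Z(XZ^2)Z^{-1}=\omega XZ^2$) one gets $ZP_\pm Z^{-1}=(\omega P)_\pm$ for $P\in\{X,XZ,XZ^2\}$; expanding $\omega=-\tfrac12+i\tfrac{\sqrt3}{2}$ in $(\omega P)_+=\tfrac1{\sqrt6}(\omega P+\omega^2P^2)$ and $(\omega P)_-=\tfrac{i}{\sqrt6}(\omega P-\omega^2P^2)$ shows that conjugation by $Z$ rotates each plane $\operatorname{span}(P_+,P_-)$ by $2\pi/3$. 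Hence on $V_1$ we have $J=\tfrac1{\sqrt3}(\mathrm{Ad}_Z-\mathrm{Ad}_{Z^2})$, where $\mathrm{Ad}_W(M)=WMW^{-1}$; equivalently $P_-=\tfrac1{\sqrt3}\bigl(ZP_+Z^{-1}-Z^2P_+Z^{-2}\bigr)$ for $P\in\{X,XZ,XZ^2\}$.

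Now let $D$ be any diagonal unitary. Since $D$ commutes with $Z$, conjugation by $D$ fixes both $Z_+=\tfrac1{\sqrt6}(Z+Z^2)$ and $Z_-=\tfrac{i}{\sqrt6}(Z-Z^2)$, so $\hat D$ commutes with $J$ on $\operatorname{span}(Z_+,Z_-)$ trivially. On $V_1$, conjugation by a diagonal matrix preserves off‑diagonal Hermitian matrices, so $\hat D(V_1)=V_1$; using again that $D$ commutes with $Z$ and $Z^2$, for $P\in\{X,XZ,XZ^2\}$ we get
\[
\hat D(P_-)\;=\;DP_-D^{-1}\;=\;\tfrac1{\sqrt3}\bigl(Z\,\hat D(P_+)\,Z^{-1}-Z^2\,\hat D(P_+)\,Z^{-2}\bigr)\;=\;J\bigl(\hat D(P_+)\bigr),
\]
the last step being the formula for $J$ on $V_1$ applied to $\hat D(P_+)\in V_1$; applying $J$ and using $J^2=-\mathrm{id}$ gives the companion identity $\hat D(JP_-)=J(\hat D P_-)$. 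Checking $\hat D(JQ)=J(\hat D Q)$ also for $Q=Z_\pm$ is immediate from $\hat D(Z_\pm)=Z_\pm$. Therefore $\hat D J=J\hat D$ on all of $V$, and by the first paragraph $\hat D$ is symplectic.

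I expect the only genuine subtlety to be pinning down which symplectic form is meant: the argument shows $\hat D$ preserves $\Omega(u,v)=\langle Ju,v\rangle$ for the $J$ that pairs $Q_{i,+}$ with $Q_{i,-}$ — equivalently, that in the quadrant decomposition $\hat D$ has the form $\left[\begin{smallmatrix}A&B\\-B&A\end{smallmatrix}\right]$ and is orthogonal — and this is the natural symplectic form compatible with the quadrant structure introduced just above the lemma; if a different form is intended, one must also check the two agree. The remaining hazards are clerical: keeping the signs straight in $ZXZ^{-1}=\omega X$ and in the expansions of $(\omega P)_\pm$, and confirming that the span of the six $P_\pm$ with $P\in\{X,XZ,XZ^2\}$ is exactly the off‑diagonal Hermitian subspace (so that $J|_{V_1}=\tfrac1{\sqrt3}(\mathrm{Ad}_Z-\mathrm{Ad}_{Z^2})$ is legitimate). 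A more pedestrian alternative is to write out $\hat D$ in closed form for $D=\mathrm{diag}(d_1,d_2,d_3)$ and check directly that its quadrants obey $\hat D_{++}=\hat D_{--}$, $\hat D_{-+}=-\hat D_{+-}$ together with orthogonality; this is longer but entirely mechanical.
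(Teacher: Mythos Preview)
Your argument is correct and takes a genuinely different route from the paper. The paper's proof (relegated to an appendix) is exactly the ``pedestrian alternative'' you sketch at the end: it writes $D=\mathrm{diag}(e^{i\beta_1},e^{i\beta_2},e^{i\beta_3})$, computes all entries of $\hat D$ explicitly in terms of the $\beta_j$, observes the block form $\hat D_{++}=\hat D_{--}=A$, $\hat D_{-+}=-\hat D_{+-}=-B$, and then verifies $\hat D^\Trans\Omega\hat D=\Omega$ by hand using the orthogonality relations $AA^\Trans+BB^\Trans=\Id$, $AB^\Trans=BA^\Trans$, $A^\Trans B=B^\Trans A$. Your approach replaces this computation with a structural observation: the complex structure $J$ pairing $Q_{i,+}$ with $Q_{i,-}$ is, on the off-diagonal block $V_1$, realized as $\tfrac{1}{\sqrt3}(\mathrm{Ad}_Z-\mathrm{Ad}_{Z^2})$, so any unitary commuting with $Z$ has adjoint representation commuting with $J$, and orthogonal plus $J$-linear gives symplectic via $\mathrm U(4)=\mathrm O(8)\cap\mathrm{Sp}(8,\R)$. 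Your argument is shorter, explains \emph{why} diagonality matters (commutation with $Z$), and generalizes immediately to any unitary in the diagonal torus; the paper's version is longer but entirely self-contained and requires no familiarity with the $\mathrm U(n)\subset\mathrm{Sp}(2n)$ fact. Your caveat about which symplectic form is intended is well placed: the paper does use the standard $\Omega=\left[\begin{smallmatrix}0&I_4\\-I_4&0\end{smallmatrix}\right]$ in the $\Qq'$ ordering, so your $\Omega(u,v)=\langle Ju,v\rangle$ matches.
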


\begin{proof}
  As this result is not vital to the remainder of the work, proof is provided in Appendix~\ref{app:symplectic}.
\end{proof}

\subsection{Uniqueness}

\begin{remark}
  \label{rmk:CliffAdj}
  The entries of every adjoint representation $\hat{C}$ of a Clifford
  operator $C$ belong to the set
  $\{0,\pm1,\pm\frac{1}{2},\pm\frac{\sqrt{3}}{2}\}$. Moreover,
  $\hat{C}_{++}$ and $\hat{C}_{--}$ are both $4\times4$ generalized
  permutation matrices with the same underlying nonzero pattern and
  with entries in the set $\{0,\pm 1,\pm\frac{1}{2}\}$. On the other
  hand, $\hat{C}_{+-}$ and $\hat{C}_{-+}$ are less-than-full rank
  $4\times4$ matrices with the same nonzero pattern and at most one
  nonzero entry per row and column, with the entries belonging to the
  set $\{0,\pm\frac{\sqrt{3}}{2}\}$. These properties may be verified
  by enumeration of the 216 distinct adjoint representations of the
  Clifford operators, the full set of which we denote
  $\hat{\Clifford}$. Referencing \cref{rmk:AlphaRels} we can also
  immediately see that the entries of every
  $\hat{C}\in\hat{\Clifford}$ belong to the ring $\Z[\alpha]$.
\end{remark}

Writing the generators of $\CliffordT$ in our adjoint representation,
we have
\[
\begin{array}{c}
\hat{S}=\left(\begin{array}{cccc|cccc}
        1 & 0 & 0 & 0 & 0 & 0 & 0 & 0\\
        0 & 0 & 0 & 1 & 0 & 0 & 0 & 0\\
        0 & 1 & 0 & 0 & 0 & 0 & 0 & 0\\
        0 & 0 & 1 & 0 & 0 & 0 & 0 & 0\\
        \hline
        0 & 0 & 0 & 0 & 1 & 0 & 0 & 0\\
        0 & 0 & 0 & 0 & 0 & 0 & 0 & 1\\
        0 & 0 & 0 & 0 & 0 & 1 & 0 & 0\\
        0 & 0 & 0 & 0 & 0 & 0 & 1 & 0
\end{array}\right),\quad
\hat{H}=\left(\begin{array}{cccc|cccc}
        0 & 1 & 0 & 0 & 0 & 0 & 0 & 0\\
        1 & 0 & 0 & 0 & 0 & 0 & 0 & 0\\
        0 & 0 & 0 & -\frac{1}{2} & 0 & 0 & 0 & -\frac{\sqrt{3}}{2}\\
        0 & 0 & 1 & 0 & 0 & 0 & 0 & 0\\
        \hline
        0 & 0 & 0 & 0 & 0 & 1 & 0 & 0\\
        0 & 0 & 0 & 0 & -1 & 0 & 0 & 0\\
        0 & 0 & 0 & \frac{\sqrt{3}}{2} & 0 & 0 & 0 & -\frac{1}{2}\\
        0 & 0 & 0 & 0 & 0 & 0 & -1 & 0
\end{array}\right),\\ \\
\hat{T}=\left(\begin{array}{cccc|cccc}
        1 & 0 & 0 & 0 & 0 & 0 & 0 & 0\\
        0 & t_1 & t_1 & t_2 & 0 & t_3 & t_3 & t_4\\
        0 & t_2 & t_1 & t_1 & 0 & t_4 & t_3 & t_3\\
        0 & t_1 & t_2 & t_1 & 0 & t_3 & t_4 & t_3\\
        \hline
        0 & 0 & 0 & 0 & 1 & 0 & 0 & 0\\
        0 & -t_3 & -t_3 & -t_4 & 0 & t_1 & t_1 & t_2\\
        0 & -t_4 & -t_3 & -t_3 & 0 & t_2 & t_1 & t_1\\
        0 & -t_3 & -t_4 & -t_3 & 0 & t_1 & t_2 & t_1
\end{array}\right)
\end{array}
\]
where we have defined
\[
  \begin{array}{cc}
  t_1=\frac{1}{\alpha^2}\left(-\frac{1}{2^2}+2\alpha^2-
  2\alpha^4\right), & t_2=\frac{1}{\alpha^2}
  \left(\frac{1}{2^3}-\frac{3}{2}\alpha^2+2\alpha^4\right),
  \\ t_3=\frac{1}{\alpha^3}\left(-\frac{1}{2^4}+\frac{1}{2}\alpha^2-
  \alpha^4\right), &
  t_4=\frac{1}{\alpha^3}\left(\frac{1}{2^3}-\alpha^2+\alpha^4\right),
  \end{array}
\]
Note that the entries of any $\CliffordT$ operator belong to the ring
$\Z[\frac{1}{\gamma}]$. This is true due to the fact that it holds for
the generators $S$, $R$, and $T$, which is clear once one recognizes
that
\[
  \frac{1}{\sqrt{-3}} = \frac{1}{\gamma^3}
  \left[-1+\zeta-\zeta^2-\zeta^3-2\zeta^4
    +2\zeta^5\right]\in\Z[\frac{1}{\gamma}].
\]
Furthermore, the adjoint representation of any $\CliffordT$ operator
has entries in the ring $\Z[\frac{1}{2},\frac{1}{\alpha}]$, which
follows from \cref{rmk:CliffAdj} and the fact that the statement holds
for the remaining generator, $\hat{T}$.

\begin{remark}
  \label{rmk:CliffProp}
  Consider $\rho_0(\hat{S})$ and $\rho_0(\hat{H})$. Under the residue
  map, these Cliffords become the following:
  \[
    \rho_0(\hat{S})=\left(\begin{array}{cccc|cccc}
            1 & 0 & 0 & 0 & 0 & 0 & 0 & 0\\
            0 & 0 & 0 & 1 & 0 & 0 & 0 & 0\\
            0 & 1 & 0 & 0 & 0 & 0 & 0 & 0\\
            0 & 0 & 1 & 0 & 0 & 0 & 0 & 0\\
            \hline
            0 & 0 & 0 & 0 & 1 & 0 & 0 & 0\\
            0 & 0 & 0 & 0 & 0 & 0 & 0 & 1\\
            0 & 0 & 0 & 0 & 0 & 1 & 0 & 0\\
            0 & 0 & 0 & 0 & 0 & 0 & 1 & 0
    \end{array}\right)\quad\mbox{and}\quad
    \rho_0(\hat{H})=\left(\begin{array}{cccc|cccc}
            0 & 1 & 0 & 0 & 0 & 0 & 0 & 0\\
            1 & 0 & 0 & 0 & 0 & 0 & 0 & 0\\
            0 & 0 & 0 & 1 & 0 & 0 & 0 & 0\\
            0 & 0 & 1 & 0 & 0 & 0 & 0 & 0\\
            \hline
            0 & 0 & 0 & 0 & 0 & 1 & 0 & 0\\
            0 & 0 & 0 & 0 & 2 & 0 & 0 & 0\\
            0 & 0 & 0 & 0 & 0 & 0 & 0 & 1\\
            0 & 0 & 0 & 0 & 0 & 0 & 2 & 0
    \end{array}\right)
  \]
  Thus, \emph{any} adjoint representation $\hat{C}$ of a Clifford
  operator is such that the following hold:
  \begin{itemize}
    \item The LDE of $\hat{C}$ is zero
    \item $\rho_0(\hat{C})$ is a generalized permutation matrix with
      entries in $\Z_3$
    \item $\rho_0(\hat{C}_{++})$ is a true permutation matrix
    \item $\rho_0(\hat{C}_{++})$ is a generalized permutation matrix
      with entries in $\Z_3$
    \item $\rho_0(\hat{C}_{+-})=\rho_0(\hat{C}_{-+})=0_{4\times 4}$
    \item By \cref{rmk:AlphaRels,rmk:CliffAdj} we have
      $\rho_0\left(\frac{1}{\alpha}\hat{C}_{+-}\right) =
      \rho_0\left(\frac{1}{\alpha}\hat{C}_{-+}\right)=0_{4\times 4}$
  \end{itemize}
  In particular, we explicitly write out the nonzero quadrants of
  every Clifford in the set $\Ll\Mm$:
  \[
    \begin{array}{cc}
            \rho_0(\hat{\Id}_{++})=\rho_0(\hat{H}^2_{++})=\left(\begin{array}{cccc}
              1 & 0 & 0 & 0\\
              0 & 1 & 0 & 0\\
              0 & 0 & 1 & 0\\
              0 & 0 & 0 & 1
      \end{array}\right) & 
      \rho_0(\hat{H}_{++})=\rho_0(\hat{H}^3_{++})=\left(\begin{array}{cccc}
              0 & 1 & 0 & 0\\
              1 & 0 & 0 & 0\\
              0 & 0 & 0 & 1\\
              0 & 0 & 1 & 0
      \end{array}\right)\\
      \\
      \rho_0(\hat{S}\hat{H}_{++})=\rho_0(\hat{S}\hat{H}^3_{++})=\left(\begin{array}{cccc}
              0 & 1 & 0 & 0\\
              0 & 0 & 1 & 0\\
              1 & 0 & 0 & 0\\
              0 & 0 & 0 & 1
      \end{array}\right) & 
      \rho_0(\hat{S}^2\hat{H}_{++})=\rho_0(\hat{S}^2\hat{H}^3_{++})=\left(\begin{array}{cccc}
              0 & 1 & 0 & 0\\
              0 & 0 & 0 & 1\\
              0 & 0 & 1 & 0\\
              1 & 0 & 0 & 0
      \end{array}\right)\\
      \\
      \rho_0(\hat{\Id}_{--})=-\rho_0(\hat{H}^2_{--})=\left(\begin{array}{cccc}
      1 & 0 & 0 & 0\\
      0 & 1 & 0 & 0\\
      0 & 0 & 1 & 0\\
      0 & 0 & 0 & 1
      \end{array}\right) & 
      \rho_0(\hat{H}_{--})=-\rho_0(\hat{H}^3_{--})=\left(\begin{array}{cccc}
      0 & 1 & 0 & 0\\
      2 & 0 & 0 & 0\\
      0 & 0 & 0 & 1\\
      0 & 0 & 2 & 0
      \end{array}\right)\\
      \\
      \rho_0(\hat{S}\hat{H}_{--})=-\rho_0(\hat{S}\hat{H}^3_{--})=\left(\begin{array}{cccc}
      0 & 1 & 0 & 0\\
      0 & 0 & 2 & 0\\
      2 & 0 & 0 & 0\\
      0 & 0 & 0 & 1
      \end{array}\right) & 
      \rho_0(\hat{S}^2\hat{H}_{--})=-\rho_0(\hat{S}^2\hat{H}^3_{--})=\left(\begin{array}{cccc}
      0 & 1 & 0 & 0\\
      0 & 0 & 0 & 1\\
      0 & 0 & 2 & 0\\
      2 & 0 & 0 & 0
      \end{array}\right)
    \end{array}
  \]
\end{remark}

\begin{definition}[$k$-Adjoint Residue]
  Let $\hat{U}$ be an $8\times8$ matrix with entries in
  $\Z[\frac{1}{2},\frac{1}{\alpha}]$. Furthermore, let $\hat{U}_{++}$
  permit the denominator exponent $k$, $\hat{U}_{-+}$ and
  $\hat{U}_{+-}$ permit the denominator exponent $k+1$, and
  $\hat{U}_{--}$ permit the denominator exponent $k+2$. The we define
  the \emph{$k$-adjoint residue} of $\hat{U}$, in symbols
  $\hat{\rho}_k(\hat{U})$ and with $\hat{\rho}_k:\Matrices_8 \left(
  \Z[\frac{1}{2},\frac{1}{\alpha}]\right)
  \rightarrow\Matrices_8\left(\Z_3\right)$, as follows:
  \[
  \hat{\rho}_k(\hat{U})=\left(\begin{array}{cc}
  \rho_k(\hat{U}_{++}) & \rho_{k+1}(\hat{U}_{+-})\\
  \rho_{k+1}(\hat{U}_{-+}) & \rho_{k+2}(\hat{U}_{--})
  \end{array}\right).
  \]
  When we write $\hat{\rho}_k(\hat{U}_{\pm\pm})$, it is to be
  understood we want the array associated with function $\hat{\rho}_k$
  as it applies to the ($\pm\pm$) quadrant. This means that when we
  write e.g. $\hat{\rho}_k(\hat{U}_{+-})$, we really mean
  $\rho_{k+1}(\hat{U}_{+-})$
\end{definition}

\begin{remark}
  \label{rmk:CliffPerm}
  Let us briefly examine the consequences of left- or
  right-multiplication by a Clifford when considering only the
  $k$-adjoint residue of a matrix. In particular, let $\hat{U}$ be
  some adjoint representation of an operator where $\hat{U}$ has
  entries in $\Z[\frac{1}{2},\frac{1}{\alpha}]$ and is such that
  $\hat{\rho}_k(\hat{U})$ is well defined. Right multiplication of
  $\hat{U}$ by an adjoint representation $\hat{C}$ of a Clifford would
  yield
  \[
  \hat{U}\cdot\hat{C}=\left(\begin{array}{cc}
    \hat{U}_{++}\hat{C}_{++}+\hat{U}_{+-}\hat{C}_{-+} &
    \hat{U}_{++}\hat{C}_{+-}+\hat{U}_{+-}\hat{C}_{--}\\
    \hat{U}_{-+}\hat{C}_{++}+\hat{U}_{--}\hat{C}_{-+}
    & \hat{U}_{-+}\hat{C}_{+-}+\hat{U}_{--}\hat{C}_{--}
  \end{array}\right).
  \]
  Calculating the relevant $k$-residues of the resulting matrix, we
  have the following relations:
  \begin{align*}
    \rho_k((\hat{U}\cdot\hat{C})_{++})&=\rho_k(\hat{U}_{++})\cdot\rho_0(\hat{C}_{++})+\rho_{k+1}(\hat{U}_{+-})\cdot\rho_0\left(\frac{1}{\alpha}\hat{C}_{-+}\right)\\
    &=\rho_k(\hat{U}_{++})\cdot\rho_0(\hat{C}_{++})\\
    \rho_{k+1}((\hat{U}\cdot\hat{C})_{-+})&=\rho_{k+1}(\hat{U}_{-+})\cdot\rho_0(\hat{C}_{++})+\rho_{k+2}(\hat{U}_{--})\cdot\rho_0\left(\frac{1}{\alpha}\hat{C}_{-+}\right)\\
    &=\rho_{k+1}(\hat{U}_{-+})\cdot\rho_0(\hat{C}_{++})\\
    \rho_{k+1}((\hat{U}\cdot\hat{C})_{+-})&=\rho_{k+1}(\hat{U}_{++})\cdot\rho_0(\hat{C}_{+-})+\rho_{k+1}(\hat{U}_{+-})\cdot\rho_0\left(\hat{C}_{--}\right)\\
    &=\rho_{k+1}(\hat{U}_{+-})\cdot\rho_0(\hat{C}_{--})\\
    \rho_{k+2}((\hat{U}\cdot\hat{C})_{+-})&=\rho_{k+1}(\hat{U}_{-+})\cdot\rho_1\left(\hat{C}_{+-}\right)+\rho_{k+2}(\hat{U}_{--})\cdot\rho_0\left(\hat{C}_{--}\right)\\
    &=\rho_{k+2}(\hat{U}_{--})\cdot\rho_0(\hat{C}_{--})
  \end{align*}
  Left multiplication by a Clifford yields a similar set of relations:
  \begin{align*}
    \rho_k((\hat{C}\cdot\hat{U})_{++})&=\rho_0(\hat{C}_{++})\cdot\rho_k(\hat{U}_{++})+\rho_0\left(\frac{1}{\alpha}\hat{C}_{+-}\right)\cdot\rho_{k+1}(\hat{U}_{-+})\\
    &=\rho_0(\hat{C}_{++})\cdot\rho_k(\hat{U}_{++})\\
    \rho_{k+1}((\hat{C}\cdot\hat{U})_{-+})&=\rho_0(\hat{C}_{-+})\cdot\rho_{k+1}(\hat{U}_{++})+\rho_0(\hat{C}_{--})\cdot\rho_{k+1}(\hat{U}_{-+})\\
    &=\rho_0(\hat{C}_{--})\cdot\rho_{k+1}(\hat{U}_{-+})\\
    \rho_{k+1}((\hat{C}\cdot\hat{U})_{+-})&=\rho_0(\hat{C}_{++})\cdot\rho_{k+1}(\hat{U}_{+-})+\rho_0\left(\frac{1}{\alpha}\hat{C}_{+-}\right)\cdot\rho_{k+2}(\hat{U}_{--})\\
    &=\rho_0(\hat{C}_{++})\cdot\rho_{k+1}(\hat{U}_{+-})\\
    \rho_{k+2}((\hat{C}\cdot\hat{U})_{--})&=\rho_1\left(\hat{C}_{-+}\right)\cdot\rho_{k+1}(\hat{U}_{+-})+\rho_0(\hat{C}_{--})\cdot\rho_{k+2}(\hat{U}_{--})\\
    &=\rho_0(\hat{C}_{--})\cdot\rho_{k+2}(\hat{U}_{--})
  \end{align*}
  By these equations we immediately have for any adjoint
  representation $\hat{C}$ of a Clifford the following multiplicative
  rules for $\hat{\rho}_k$:
  \[
  \hat{\rho}_k(\hat{U}\cdot\hat{C}) =
  \hat{\rho}_k(\hat{U})\cdot\rho_0(\hat{C})\quad\mbox{and}\quad
  \hat{\rho}_k(\hat{C}\cdot\hat{U}) = \rho_0(\hat{C})\cdot
  \hat{\rho}_k(\hat{U})
  \]
  By \cref{rmk:CliffProp}, we know $\rho_0(\hat{C})$ is simply a
  generalized permutation matrix with $\rho_0(\hat{C}_{++})$ a true
  permutation matrix and $\rho_0(\hat{C}_{--})$ a generalized
  permutation matrix with the same nonzero pattern as
  $\rho_0(\hat{C}_{++})$. This means that (left-) right-multiplication
  by a Clifford simply corresponds to a permutation of (rows) columns,
  with the first 4 undergoing a true permutation and the last 4
  receiving the same underlying permutation with potential (row-)
  column-wide multiplicative factors applied.
\end{remark}

\begin{definition}[Clifford Equivalence]
  Let $\hat{U}$ and $\hat{V}$ be adjoint representations of
  $\CliffordT$ operators $U$ and $V$ such that there exists a Clifford
  operator $C$ with adjoint representation $\hat{C}$ where
  $\hat{U}\cdot\hat{C}=\hat{V}$. If $\hat{\rho}_k(\hat{U})$ is well
  defined, then we know that $\hat{\rho}_k(\hat{V})$ is also well
  defined and call these $k$-adjoint residues \emph{Clifford
    equivalent}, in symbols
  $\hat{\rho}_k(\hat{U})\sim_\Clifford\hat{\rho}_k(\hat{V})$, by which
  we mean
  $\hat{\rho}_k(\hat{U})\cdot\rho_0(\hat{C})=\hat{\rho}_k(\hat{V})$. We
  also extend this notion to quadrants, meaning that
  $\hat{\rho}_k(\hat{U}_{\pm\pm})
  \sim_\Clifford\hat{\rho}_k(\hat{V}_{\pm\pm})$ if and only if
  $\hat{\rho}_k(\hat{U})\sim_\Clifford\hat{\rho}_k(\hat{V})$ for the
  particular Clifford $\hat{C}$.
\end{definition}

\begin{proposition}
  \label{prop:canonicalPatt}
  Let $U\in\CliffordT$ be a canonical form, and $\hat{U}$ be the
  adjoint representation of $U$. Let $n$ be the $T$-count of $U$. Then
  the least denominator exponent of $\hat{U}_{++}$ is $k=2n$ and one
  of the following holds:
\end{proposition}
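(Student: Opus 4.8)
The plan is to argue by induction on the $T$-count $n$, but to make the induction go through one should strengthen the statement: the list of cases ought to record the entire $2n$-adjoint residue $\hat{\rho}_{2n}(\hat{U})$ --- that is $\rho_{2n}(\hat{U}_{++})$, $\rho_{2n+1}(\hat{U}_{+-})$, $\rho_{2n+1}(\hat{U}_{-+})$ and $\rho_{2n+2}(\hat{U}_{--})$ --- together with the denominator-exponent bounds that make $\hat{\rho}_{2n}$ well defined, and the resulting finite list is indexed by the Clifford-prefix $M_1\in\Ll\Mm$ of the leftmost syllable of the canonical form (plus a degenerate $n=0$ entry). For $n=0$ we have $U=C\in\Clifford$, and \cref{rmk:CliffProp} gives everything at once: $\hat{C}_{++}$ has least denominator exponent $0=2\cdot 0$ with $\rho_0(\hat{C}_{++})$ a genuine permutation matrix, while $\rho_1(\hat{C}_{+-})=\rho_1(\hat{C}_{-+})=\rho_2(\hat{C}_{--})=0$.

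For the inductive step, write a canonical form of $T$-count $n\geq 1$ as $U=(M_1T)\,U'$ with $M_1\in\Ll\Mm$ and $U'$ a canonical form of $T$-count $n-1$, so that $\hat{U}=\hat{M_1}\,\hat{T}\,\hat{U'}$. Left-multiplication by the Clifford $\hat{M_1}$ preserves every least denominator exponent and, by \cref{rmk:CliffPerm}, sends $\hat{\rho}_k$ to $\rho_0(\hat{M_1})\cdot\hat{\rho}_k$, a scaled permutation of rows whose matrix is read off from \cref{rmk:CliffProp}; so the real work is to pass from $\hat{U'}$ to $\hat{T}\hat{U'}$. Using \cref{rmk:AlphaRels} one computes $\rho_2(t_1)=\rho_2(t_2)=\rho_3(t_3)=\rho_3(t_4)=2$, so $\hat{T}_{++},\hat{T}_{--}$ have least denominator exponent $2$, $\hat{T}_{+-},\hat{T}_{-+}$ have least denominator exponent $3$, and
\[
\rho_2(\hat{T}_{++})=\rho_3(\hat{T}_{+-})=2J,\qquad \rho_3(\hat{T}_{-+})=J,\qquad J:=\begin{pmatrix}0&0&0&0\\0&1&1&1\\0&1&1&1\\0&1&1&1\end{pmatrix}.
\]
The crucial feature is that $J$ has rank one over $\Z_3$, namely $J=vv^{\Trans}$ with $v=(0,1,1,1)^{\Trans}$.

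Now expand $(\hat{T}\hat{U'})_{++}=\hat{T}_{++}\hat{U'}_{++}+\hat{T}_{+-}\hat{U'}_{-+}$. By \cref{rmk:ABrho} and the induction hypothesis the first summand has denominator exponent at most $2n$, while the second only has exponent at most $2n+2$ a priori. The heart of the argument is that this excess is illusory: by \cref{rmk:CliffPerm} the exponent-$(2n+2)$ part of the second summand is $\rho_3(\hat{T}_{+-})\cdot\rho_{2n-1}(\hat{U'}_{-+})=2v\bigl(v^{\Trans}\rho_{2n-1}(\hat{U'}_{-+})\bigr)$, and the case list for $\hat{U'}$ guarantees that $\rho_{2n-1}(\hat{U'}_{-+})$ has the form $\bigl(\rho_0(\hat{M}_{--})v\bigr)\cdot w^{\Trans}$ for some $M\in\Ll\Mm$ and row vector $w$; since each of the eight matrices $\rho_0(\hat{M}_{--})$ listed in \cref{rmk:CliffProp} satisfies $v^{\Trans}\rho_0(\hat{M}_{--})v\equiv 0\pmod 3$, the factor $v^{\Trans}\rho_{2n-1}(\hat{U'}_{-+})$ vanishes, killing the exponent-$(2n+2)$ part; the exponent-$(2n+1)$ part dies the same way. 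One is then left with a contribution of denominator exponent exactly $2n$, computed from $\rho_2(\hat{T}_{++})\cdot\rho_{2n-2}(\hat{U'}_{++})$ plus lower-order pieces of the second summand; using the rank-one identity $2JX=2v(v^{\Trans}X)$ and the explicit form of $\rho_{2n-2}(\hat{U'}_{++})$ from the case list (which always has a nonzero entry among rows $2,3,4$), this matrix is nonzero over $\Z_3$, so by \cref{lem:LDEcond} the least denominator exponent of $(\hat{T}\hat{U'})_{++}$, hence of $\hat{U}_{++}$, is exactly $2n$. The analogous computations for the other three quadrants, followed by the row permutation $\rho_0(\hat{M_1})$, show that $\hat{\rho}_{2n}(\hat{U})$ is again one of the listed patterns --- the one indexed by $M_1$ --- and also preserve the structural properties (the $\bigl(\rho_0(\hat{M}_{--})v\bigr)w^{\Trans}$ form of the off-diagonal residues, and the like) used above, which closes the induction.

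The main obstacle is organizational rather than conceptual: one must guess the finite list of admissible residue patterns so that it is simultaneously closed under $\hat{V}\mapsto\rho_0(\hat{M_1})\cdot\hat{\rho}(\hat{T}\hat{V})$ for all eight prefixes $M_1$, strong enough (via the rank-one identity and the fact $v^{\Trans}\rho_0(\hat{M}_{--})v\equiv 0\pmod 3$) to force the cancellations that pin the $(++)$-exponent at exactly $2n$ rather than letting it drift up to $2n+2$, and explicit enough that $M_1$ can be read back off from $\hat{\rho}_{2n}(\hat{U})$ --- which is the use to which \cref{prop:canonicalPatt} will be put. Once the list is correctly identified, each individual check is a finite matrix computation over $\Z_3$ driven by the rank-one structure of $\rho_2(\hat{T}_{++})$ and the Clifford tables of \cref{rmk:CliffProp}; the genuinely delicate point is keeping the denominator exponents of all four quadrants synchronized at every step (the $(++)$-quadrant at $2n$, the off-diagonal quadrants one higher, the $(--)$-quadrant one higher still) so that the residue-arithmetic rules apply without a spurious drop.
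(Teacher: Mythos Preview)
Your rank-one observation---that $\rho_2(\hat{T}_{++})$ and $\rho_3(\hat{T}_{+-})$ are multiples of $J=vv^{\Trans}$ with $v=(0,1,1,1)^{\Trans}$, and that $v^{\Trans}\rho_0(\hat{M}_{--})v\equiv 0\pmod 3$ for every $M\in\Ll\Mm$---is a clean conceptual explanation for the cancellations that the paper verifies by brute enumeration of 6 or 36 cases. But the single-step induction you propose does not close.

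The gap is at the sentence ``the exponent-$(2n+1)$ part dies the same way.'' Your rank-one identity shows $\rho_{2n+2}\bigl(\hat{T}_{+-}\hat{U'}_{-+}\bigr)=0$, so the denominator exponent of this cross-term drops from $2n+2$ to at most $2n+1$. To land at $2n$ you must also kill $\rho_{2n+1}\bigl(\hat{T}_{+-}\hat{U'}_{-+}\bigr)$, and that residue cannot be read off from your hypothesis: computing it requires knowing $\alpha^{2n-1}\hat{U'}_{-+}$ modulo $\alpha^{2}$, whereas your inductive data $\hat{\rho}_{2(n-1)}(\hat{U'})$ only records it modulo $\alpha$. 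Concretely, writing $\alpha^{3}\hat{T}_{+-}=2J+\alpha^{2}R$ (which does hold), one has $\alpha^{2n+2}\bigl(\hat{T}_{+-}\hat{U'}_{-+}\bigr)=2J\cdot\alpha^{2n-1}\hat{U'}_{-+}+\alpha^{2}R\cdot(\cdots)$; the second piece is in $\alpha^{2}\Z[\alpha]$, but for the first you need $v^{\Trans}\bigl(\alpha^{2n-1}\hat{U'}_{-+}\bigr)\in\alpha^{2}\Z[\alpha]$, and your hypothesis only gives membership in $\alpha\Z[\alpha]$. One residue identity cannot lower a denominator exponent by two.

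The paper's remedy is exactly this missing ingredient: it expands one further syllable, writing $(\hat{U'})_{-+}$ in terms of the quadrants of the canonical form of $T$-count $n-2$, so that each summand of the cross-term already sits at the correct denominator level and the vanishing can be checked by finite enumeration using the inductive hypothesis at \emph{both} levels $n-1$ and $n-2$. This two-step lookback is also why the paper verifies base cases by direct computation through $T$-count $3$, not just $T$-count $0$. Your argument can be repaired either by adopting that two-level structure or by strengthening the inductive invariant to track $\alpha^{2n+1}\hat{U}_{-+}$ modulo $\alpha^{2}$ (equivalently, the row-sum $v^{\Trans}\alpha^{2n+1}\hat{U}_{-+}$ modulo $\alpha^{2}$); as written, it does not go through for $n\geq 2$.
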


\begin{itemize}
  \item $n=0$ and $U$ is a Clifford operator.
  \item $n>0$ and one of 8 distinguishable cases holds for $\hat{\rho}_{2n}(\hat{U})$:\\
  \begin{align*}
   \hat{\rho}_{2n}(\hat{U}_{++})\sim_\Clifford\rho_0(\hat{M}_{++})\cdot\left(\begin{array}{cccc}
            0 & 0 & 0 & 0\\
            0 & 2 & 2 & 2\\
            0 & 2 & 2 & 2\\
            0 & 2 & 2 & 2
    \end{array}\right),\qquad \hat{\rho}_{2n}(\hat{U}_{-+})\sim_\Clifford\rho_0(\hat{M}_{--})\cdot\left(\begin{array}{cccc}
    0 & 0 & 0 & 0\\
    0 & 1 & 1 & 1\\
    0 & 1 & 1 & 1\\
    0 & 1 & 1 & 1
    \end{array}\right),
  \end{align*}
and the leftmost syllable is $MT$ with Clifford prefix $M\in\Ll\Mm$
    
\end{itemize}

\begin{proof}
  By direct computation, these statements hold for all canonical forms
  up to $T$-count three. In particular, enumerating these canonical
  forms gives the further condition that
  \[
  \hat{\rho}_{2n}(\hat{U}_{+-})=\hat{\rho}_{2n}(\hat{U}_{++}) \quad
  \mbox{and}\quad\hat{\rho}_{2n}(\hat{U}_{--})=\hat{\rho}_{2n}(\hat{U}_{-+})
  \]
  for all canonical forms of $T$-count $n=2$ and $n=3$ without a
  rightmost Clifford. Let $\hat{U}_{n,M_n}$ be an adjoint
  representation of a canonical form with $T$-count $n>2$ and leftmost
  syllable $M_n$ with Clifford prefix $M_n'\in\Ll'\Mm$. Let
  $\hat{U}_{n-1,M_{n-1}}=\hat{M}_n^\Trans\hat{U}_{n,M_n}$ such that it
  is also an adjoint representation of a canonical form with $T$-count
  $n-1>1$ and leftmost syllable $M_{n-1}$ with Clifford prefix
  $M_{n-1}'\in\Ll'\Mm$. Consider left-multiplication of
  $\hat{U}_{n,M_n}$ by $\hat{T}$:
  \[
          \hat{T}\hat{U}_{n,M_n}=\left(\begin{array}{cc}
                  \hat{T}_{++} & \hat{T}_{+-}\\
                  -\hat{T}_{+-} & \hat{T}_{++}
          \end{array}\right)\left(\begin{array}{cc}
                  (\hat{U}_{n,M_n})_{++} & (\hat{U}_{n,M_n})_{+-}\\
                  (\hat{U}_{n,M_n})_{-+} & (\hat{U}_{n,M_n})_{--}
          \end{array}\right).
  \]
  As $M_1$ is the leftmost syllable of $\hat{U}_{n,M_n}$, we can also
  rewrite some of its quadrants as
  \begin{align*}
          (\hat{U}_{n,M_n})_{-+}&=(\hat{M}_n)_{-+}(\hat{U}_{n-1,M_{n-1}})_{++}+(\hat{M}_n)_{--}(\hat{U}_{n-1,M_{n-1}})_{-+}\\
          (\hat{U}_{n,M_n})_{--}&=(\hat{M}_n)_{-+}(\hat{U}_{n-1,M_{n-1}})_{+-}+(\hat{M}_n)_{--}(\hat{U}_{n-1,M_{n-1}})_{--}
  \end{align*}
  Using these substitutions, we may write the following equations for
  the resulting quadrant matrices of $\hat{T}\hat{U}_{n,M_n}$:
  \begin{align*}
          (\hat{T}\hat{U}_{n,M_n})_{++}&=\hat{T}_{++}(\hat{U}_{n,M_n})_{++}+\hat{T}_{+-}(\hat{M}_n)_{-+}(\hat{U}_{n-1,M_{n-1}})_{++}+\hat{T}_{+-}(\hat{M}_n)_{--}(\hat{U}_{n-1,M_{n-1}})_{-+}\\
          (\hat{T}\hat{U}_{n,M_n})_{-+}&=-\hat{T}_{+-}(\hat{U}_{n,M_n})_{++}+\hat{T}_{++}(\hat{U}_{n,M_n})_{-+}\\
          (\hat{T}\hat{U}_{n,M_n})_{+-}&=\hat{T}_{++}(\hat{U}_{n,M_n})_{+-}+\hat{T}_{+-}(\hat{M}_n)_{-+}(\hat{U}_{n-1,M_{n-1}})_{+-}+\hat{T}_{+-}(\hat{M}_n)_{--}(\hat{U}_{n-1,M_{n-1}})_{--}\\
          (\hat{T}\hat{U}_{n,M_n})_{--}&=-\hat{T}_{+-}(\hat{U}_{n,M_n})_{+-}+\hat{T}_{++}(\hat{U}_{n,M_n})_{--}.
  \end{align*}
  Assume that $\hat{U}_{n,M_n,M_{n-1}}$ and $\hat{U}_{n-1,M_{n-1}}$
  have the following $2n$- and $2(n-1)$-adjoint residues,
  respectively:
  \begin{align*}
    \hat{\rho}_{2n}(\hat{U}_{n,M_n})&\sim_\Clifford\rho_0(\hat{M}_n')\cdot\left(\begin{array}{cccc|cccc}
            0 & 0 & 0 & 0 & 0 & 0 & 0 & 0\\
            0 & 2 & 2 & 2 & 0 & 2 & 2 & 2\\
            0 & 2 & 2 & 2 & 0 & 2 & 2 & 2\\
            0 & 2 & 2 & 2 & 0 & 2 & 2 & 2\\
            \hline
            0 & 0 & 0 & 0 & 0 & 0 & 0 & 0\\
            0 & 1 & 1 & 1 & 0 & 1 & 1 & 1\\
            0 & 1 & 1 & 1 & 0 & 1 & 1 & 1\\
            0 & 1 & 1 & 1 & 0 & 1 & 1 & 1
    \end{array}\right)\\
    \hat{\rho}_{2(n-1)}(\hat{U}_{n-1,M_{n-1}})&\sim_\Clifford\rho_0(\hat{M}_{n-1}')\cdot\left(\begin{array}{cccc|cccc}
            0 & 0 & 0 & 0 & 0 & 0 & 0 & 0\\
            0 & 2 & 2 & 2 & 0 & 2 & 2 & 2\\
            0 & 2 & 2 & 2 & 0 & 2 & 2 & 2\\
            0 & 2 & 2 & 2 & 0 & 2 & 2 & 2\\
            \hline
            0 & 0 & 0 & 0 & 0 & 0 & 0 & 0\\
            0 & 1 & 1 & 1 & 0 & 1 & 1 & 1\\
            0 & 1 & 1 & 1 & 0 & 1 & 1 & 1\\
            0 & 1 & 1 & 1 & 0 & 1 & 1 & 1
    \end{array}\right)
  \end{align*}
  Now, consider $\hat{\rho}_{2(n+1)}(\hat{T}\hat{U}_{n,M_n})$. Using
  \cref{rmk:ABrho,rmk:CliffPerm} and our equations for
  $(\hat{T}\hat{U}_{n,M_n})_{\pm\pm}$, we have
  \begin{align*}
          \hat{\rho}_{2(n+1)}((\hat{T}\hat{U}_{n,M_n})_{++})=&\rho_2(\hat{T}_{++})\cdot\rho_{2n}((\hat{U}_{n,M_n})_{++})\\
          &+\rho_4(\hat{T}_{+-}(\hat{M}_n)_{-+})\cdot\rho_{2(n-1)}((\hat{U}_{n-1,M_{n-1}})_{++})\\
          &+\rho_3(\hat{T}_{+-}(\hat{M}_n)_{--})\cdot\rho_{2n-1}((\hat{U}_{n-1,M_{n-1}})_{-+})\\
          \hat{\rho}_{2(n+1)}((\hat{T}\hat{U}_{n,M_n})_{-+})=&-\rho_3(\hat{T}_{+-})\cdot\rho_{2n}((\hat{U}_{n,M_n})_{++})\\
          &+\rho_2(\hat{T}_{++})\cdot\rho_{2n+1}((\hat{U}_{n,M_n})_{-+})\\
          \hat{\rho}_{2(n+1)}((\hat{T}\hat{U}_{n,M_n})_{+-})=&\rho_2(\hat{T}_{++})\cdot\rho_{2n+1}((\hat{U}_{n,M_n})_{+-})\\
          &+\rho_4(\hat{T}_{+-}(\hat{M}_n)_{-+})\cdot\rho_{2n-1}((\hat{U}_{n-1,M_{n-1}})_{+-})\\
          &+\rho_3(\hat{T}_{+-}(\hat{M}_n)_{--})\cdot\rho_{2n}((\hat{U}_{n-1,M_{n-1}})_{--})\\
          \hat{\rho}_{2(n+1)}((\hat{T}\hat{U}_{n,M_n})_{--})=&-\rho_3(\hat{T}_{+-})\cdot\rho_{2n+1}((\hat{U}_{n,M_n})_{+-})\\
          &+\rho_2(\hat{T}_{++})\cdot\rho_{2n+2}((\hat{U}_{n,M_n})_{--})
  \end{align*}
  Enumeration of the 6 possibilities for
  $\rho_4(\hat{T}_{+-}(\hat{M}_n)_{-+})$ yields
  $\rho_4(\hat{T}_{+-}(\hat{M}_n)_{-+})=0_{4\times4}$. Similarly,
  evaluation of the 36 distinct cases for
  $\rho_3(\hat{T}_{+-}(\hat{M}_n)_{--})\rho_{2n-1}((\hat{U}_{n-1,M_{n-1}})_{-+})$
  and
  $\rho_3(\hat{T}_{+-}(\hat{M}_n)_{--})\rho_{2n}((\hat{U}_{n-1,M_{n-1}})_{--})$
  yield
  \[
  \rho_3(\hat{T}_{+-}(\hat{M}_n)_{--}) \cdot
  \rho_{2n-1}((\hat{U}_{n-1,M_{n-1}})_{-+}) =
  \rho_3(\hat{T}_{+-}(\hat{M}_n)_{--})\cdot\rho_{2n}((\hat{U}_{n-1,M_{n-1}})_{--})=0_{4\times4}.
  \]
  Finally, the 6 options for both
  $\rho_2(\hat{T}_{++})\cdot\rho_{2n+1}((\hat{U}_{n,M_n})_{-+})$ and
  $\rho_2(\hat{T}_{++})\cdot\rho_{2n+2}((\hat{U}_{n,M_n})_{--})$ again
  yield
  \[
  \rho_2(\hat{T}_{++})\cdot\rho_{2n+1}((\hat{U}_{n,M_n})_{-+}) =
  \rho_2(\hat{T}_{++})\cdot\rho_{2n+2}((\hat{U}_{n,M_n})_{--})=0_{4\times4}.
  \]
  This leaves us a simplified set of equations
  \begin{align*}
          \hat{\rho}_{2(n+1)}((\hat{T}\hat{U}_{n,M_n})_{++})&=\rho_2(\hat{T}_{++})\cdot\rho_{2n}((\hat{U}_{n,M_n})_{++})\\
          \hat{\rho}_{2(n+1)}((\hat{T}\hat{U}_{n,M_n})_{-+})&=-\rho_3(\hat{T}_{+-})\cdot\rho_{2n}((\hat{U}_{n,M_n})_{++})\\
          \hat{\rho}_{2(n+1)}((\hat{T}\hat{U}_{n,M_n})_{+-})&=\rho_2(\hat{T}_{++})\cdot\rho_{2n+1}((\hat{U}_{n,M_n})_{+-})\\
          \hat{\rho}_{2(n+1)}((\hat{T}\hat{U}_{n,M_n})_{--})&=-\rho_3(\hat{T}_{+-})\cdot\rho_{2n+2}((\hat{U}_{n,M_n})_{+-})			
  \end{align*}
  Direct evaluation of the 6 options for each term yield only one
  possible resulting adjoint representation, summarized as follows
  $\rho_2(\hat{T}_{++})\cdot\rho_{2n}((\hat{U}_{n,M_n,M_{n-1}})_{++})$
  gives only one possible result:
  \[
    \hat{\rho}_{2(n+1)}(\hat{T}\hat{U}_{n,M_n})\sim_\Clifford\left(\begin{array}{cccc|cccc}
            0 & 0 & 0 & 0 & 0 & 0 & 0 & 0\\
            0 & 2 & 2 & 2 & 0 & 2 & 2 & 2\\
            0 & 2 & 2 & 2 & 0 & 2 & 2 & 2\\
            0 & 2 & 2 & 2 & 0 & 2 & 2 & 2\\
            \hline
            0 & 0 & 0 & 0 & 0 & 0 & 0 & 0\\
            0 & 1 & 1 & 1 & 0 & 1 & 1 & 1\\
            0 & 1 & 1 & 1 & 0 & 1 & 1 & 1\\
            0 & 1 & 1 & 1 & 0 & 1 & 1 & 1
    \end{array}\right).
  \]
  As $\hat{T}\hat{U}_{n,M_n,M_{n-1}}$ is itself an adjoint
  representation of a canonical form with $T$-count $n+1$ and leftmost
  syllable $T$, left-multiplication by any element $\hat{M}_{n+1}'$
  from the adjoint representation for the set $\Ll\Mm$ is also an
  adjoint representation of a canonical form with $T$-count $n+1$ and
  leftmost syllable $M_{n+1}$ with Clifford Prefix $M_{n+1}'$. Calling
  this new operator $\hat{U}_{n+1,M_{n+1}}$, we have
  \[
    \hat{\rho}_{2(n+1)}(\hat{U}_{n+1,M_{n+1}})\sim_\Clifford\rho_0(\hat{M}_{n+1}')\cdot\left(\begin{array}{cccc|cccc}
            0 & 0 & 0 & 0 & 0 & 0 & 0 & 0\\
            0 & 2 & 2 & 2 & 0 & 2 & 2 & 2\\
            0 & 2 & 2 & 2 & 0 & 2 & 2 & 2\\
            0 & 2 & 2 & 2 & 0 & 2 & 2 & 2\\
            \hline
            0 & 0 & 0 & 0 & 0 & 0 & 0 & 0\\
            0 & 1 & 1 & 1 & 0 & 1 & 1 & 1\\
            0 & 1 & 1 & 1 & 0 & 1 & 1 & 1\\
            0 & 1 & 1 & 1 & 0 & 1 & 1 & 1
    \end{array}\right).
  \]
  This particular pattern is then persistent under an inductive
  argument, given two consecutive $T$-counts possess the stated
  properties. Because all $T$-count 2 and 3 canonical forms obey the
  requisite requirements, we thus have that any canonical form
  $\hat{U}_{n,M_n}$ of $T$-count $n\geq2$ and leftmost syllable $M_n$
  with Clifford prefix $M_n'\in\Ll\Mm$ will obey the relations
  \begin{align*}
  	\hat{\rho}_{2n}(\hat{U}_{++})\sim_\Clifford\rho_0(\hat{(M_n')}_{++})\cdot\left(\begin{array}{cccc}
  		0 & 0 & 0 & 0\\
  		0 & 2 & 2 & 2\\
  		0 & 2 & 2 & 2\\
  		0 & 2 & 2 & 2
  	\end{array}\right),\qquad \hat{\rho}_{2n}(\hat{U}_{-+})\sim_\Clifford\rho_0(\hat{(M_n')}_{--})\cdot\left(\begin{array}{cccc}
  		0 & 0 & 0 & 0\\
  		0 & 1 & 1 & 1\\
  		0 & 1 & 1 & 1\\
  		0 & 1 & 1 & 1
  	\end{array}\right).
  \end{align*}
  Enumeration of canonical forms of $T$-count one shows that they likewise have this property, and so coupled with the fact that the LDE
  of any Clifford operator is zero we have shown
  \cref{prop:canonicalPatt} to be true.
\end{proof}

\begin{proposition}
  \label{prop:canonicalunique}
  If $M$ and $N$ are different canonical forms then they represent
  different operators.
\end{proposition}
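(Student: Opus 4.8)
The plan is to prove the contrapositive: if canonical forms $M$ and $N$ represent the same Clifford+$T$ operator $U$, then they are literally the same word. Since the adjoint representation of an operator depends only on its conjugation action, $\hat{M}=\hat{N}=\hat{U}$; and by \cref{prop:canonicalPatt} the least denominator exponent of $\hat{U}_{++}$ is twice the $T$-count of any canonical form for $U$, so $M$ and $N$ have a common $T$-count $n$. We induct on $n$. When $n=0$, both $M$ and $N$ are single Clifford operators by \cref{def:canonical}, so $M=U=N$ makes them the same canonical form.

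For the inductive step let $n\geq1$ and write $M=M_1M_2\cdots M_nC$ and $N=N_1N_2\cdots N_nC'$, where $M_1$ and $N_1$ are the leftmost syllables (one of $T$, $H^2T$, or one of the six syllables of $\Ll'\Mm T$), with Clifford prefixes $M_1',N_1'\in\Ll\Mm$, the remaining $M_i,N_i$ lie in $\Ll'\Mm T$, and $C,C'\in\Clifford$. The crux is to show $M_1=N_1$. From $\hat{M}=\hat{N}$ we have the exact equality $\hat{\rho}_{2n}(\hat{M})=\hat{\rho}_{2n}(\hat{N})$. Applying the two quadrant relations of \cref{prop:canonicalPatt} to $M$ and to $N$ (each up to its own implementing Clifford) and using this equality, one obtains a single true permutation matrix $P_0$ with
\[
\rho_0(\hat{M_1'}_{++})\,J_+ = \rho_0(\hat{N_1'}_{++})\,J_+\,P_0
\quad\text{and}\quad
\rho_0(\hat{M_1'}_{--})\,J_- = \rho_0(\hat{N_1'}_{--})\,J_-\,P_0 ,
\]
where $J_+$ and $J_-$ are the two rank-one $4\times4$ matrices on the right of \cref{prop:canonicalPatt} (entries in $\{0,2\}$ and $\{0,1\}$ respectively); here one uses that the off-diagonal quadrants of a Clifford residue vanish, so \emph{both} quadrants get multiplied on the right by the same true permutation $\rho_0(\hat{C}_{++})$. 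Reading the first equation as an identity of rank-one matrices forces $P_0$ to fix the first index and forces $\rho_0(\hat{M_1'}_{++})=\rho_0(\hat{N_1'}_{++})$, which by \cref{rmk:CliffProp} means $M_1'$ and $N_1'$ lie in a common coset among $\{\Id,H^2\},\{H,H^3\},\{SH,SH^3\},\{S^2H,S^2H^3\}$. The second equation then forces $\rho_0(\hat{M_1'}_{--})\,e=\rho_0(\hat{N_1'}_{--})\,e$ for $e=(0,1,1,1)^\Trans$; since passing from one element of such a coset to the other negates the $(--)$ quadrant residue (because $\rho_0(\hat{H}^2_{--})=-\Id$) while $\rho_0(\hat{C}_{--})\,e\neq0$ for any Clifford $C$, this is impossible unless $M_1'=N_1'$. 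Hence $M_1'=N_1'$, so $M_1=N_1$, a syllable being determined by its prefix. (This is the content of the ``eight distinguishable cases'' in \cref{prop:canonicalPatt}.)

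Having matched the leftmost syllables, strip them off: as operators, $M_2\cdots M_nC=M_1^{-1}M=N_1^{-1}N=N_2\cdots N_nC'$. Both sides are canonical forms of $T$-count $n-1$ (empty initial segment, syllables from $\Ll'\Mm T$, trailing Clifford), so the induction hypothesis gives $M_i=N_i$ for $2\leq i\leq n$ and $C=C'$. Together with $M_1=N_1$ this yields $M=N$, completing the induction.

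The step that needs the most care is the quadrant bookkeeping just described — in particular keeping track of the fact that right multiplication by a Clifford acts on the $(++)$ and $(-+)$ quadrants of an adjoint residue by the \emph{same} true permutation — together with the finite verification against the residue tables of \cref{rmk:CliffProp}; everything else is a routine consequence of \cref{prop:canonicalPatt} and the inductive setup.
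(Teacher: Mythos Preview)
Your proof is correct and follows essentially the same route as the paper's: both use \cref{prop:canonicalPatt} to read off the $T$-count from the least denominator exponent and then to identify the leftmost syllable from the adjoint residue, after which one strips and repeats. The paper organizes this as a direct argument (strip the common prefix once and appeal to the ``8 distinguishable cases''), whereas you phrase it as the contrapositive with an induction peeling one syllable at a time; your more explicit verification that the eight residue patterns really are distinguishable (via the $(++)$ quadrant fixing the $\Ll$-coset and the $(-+)$ quadrant resolving the $H^2$ ambiguity) simply unpacks what the paper leaves to the tables in \cref{rmk:CliffProp}.
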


\begin{proof}
  \label{proof:canonicalunique}  
  Let $U$ and $V$ be different canonical forms with $T$-counts $n$ and
  $m$ respectively. If $n\neq m$, then by \cref{prop:canonicalPatt}
  the LDEs of $\hat{U}_{++}$ and $\hat{V}_{++}$ differ and so must $U$
  and $V$. This leaves the case when $n=m$. Let $U$ and $V$ differ
  such that their first mismatched syllable starting from the left is
  the $p$th syllable counting from the right, with $U_{p,0}$ the
  associated canonical form of $U$ truncated at this syllable as
  starting from the right such that $U=U_{n,p+1} U_{p,0}$. Then
  $\hat{U}_{n,p+1}^\Trans \cdot
  \hat{U}\neq\hat{U}_{n,p+1}^\Trans\cdot\hat{V}$ by
  \cref{prop:canonicalPatt}, and thus $U$ and $V$ are different. Now,
  let $U$ and $V$ be such that every syllable is identical, but their
  rightmost Cliffords are different. Then $U^\dagger
  V\in\Clifford\backslash\{\Id\}$ and therefore $U\neq V$. This
  enumerates all possible cases.
\end{proof}

\begin{corollary}
  \label{cor:canonicalAlg}  
  Let $U\in\mathcal{C}+T$ have adjoint representation $\hat{U}$ with
  LDE $k$ of $\hat{U}_{++}$. Then the canonical form $M$ associated
  with $U$ has $T$-count $n=\frac{k}{2}$ and can be efficiently
  computed in $\mathcal{O}(n)$ arithmetic operations.
\end{corollary}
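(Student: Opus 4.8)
The plan is to read the canonical form of $U$ off of $\hat U$ one syllable at a time, from the left, using \cref{prop:canonicalexistence,prop:canonicalPatt,prop:canonicalunique}. The $T$-count claim is immediate: by \cref{prop:canonicalexistence} the operator $U$ has a canonical form $M$, which is unique by \cref{prop:canonicalunique}; since $M$ equals $U$ as an operator we have $\hat M=\hat U$, so \cref{prop:canonicalPatt} says the least denominator exponent of $\hat U_{++}=\hat M_{++}$ is $2n$, where $n$ is the $T$-count of $M$. Hence $n=k/2$ (in particular $k$ is even), and the $T$-count of the canonical form of $U$ is determined by $\hat U_{++}$ alone.

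For the algorithm, maintain adjoint representations $\hat W_0,\hat W_1,\dots,\hat W_n$ with $\hat W_0=\hat U$, where $\hat W_j$ is intended to be the adjoint representation of the word obtained from $M$ by deleting its $j$ leftmost syllables. Inductively $W_j$ is a canonical form of $T$-count $n-j$, so if $n-j>0$ then by \cref{prop:canonicalPatt} the $++$ block of $\hat W_j$ has least denominator exponent $2(n-j)$, the adjoint residue $\hat\rho_{2(n-j)}(\hat W_j)$ is well defined, and it falls into exactly one of the eight cases of that proposition; this case identifies the Clifford prefix $M'_{j+1}\in\Ll\Mm$ of the leftmost syllable $M_{j+1}=M'_{j+1}T$ of $W_j$. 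We then set $\hat W_{j+1}=\hat T^{\Trans}(\hat M'_{j+1})^{\Trans}\hat W_j$; since adjoint representations are special orthogonal, $\hat T^{\Trans}(\hat M'_{j+1})^{\Trans}=\widehat{M_{j+1}}^{-1}$, so $\hat W_{j+1}$ is indeed the adjoint representation of $W_j$ with its first syllable removed, which is again a canonical form (of $T$-count $n-j-1$). After $n$ steps $\hat W_n$ is the adjoint representation of a single Clifford, and we output $M=M_1T\,M_2T\cdots M_nT\,C$, where the final Clifford is recovered exactly as $C=(M_1T\,M_2T\cdots M_nT)^{-1}U$ using the matrix $U$ itself --- recall that the adjoint representation determines $U$ only up to the centre of $SU(3)$.

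Correctness is the induction above together with \cref{prop:canonicalPatt}: its base cases (small $T$-count) were enumerated in that proof, its inductive step is precisely our peeling move, and \cref{prop:canonicalunique} guarantees there is a single canonical form to produce. For the running time, forming $\hat U$ from the matrix of $U$ is $\Oo(1)$ arithmetic, and each peeling step costs a bounded number of $8\times8$ matrix multiplications over $\Z[\frac{1}{2},\frac{1}{\alpha}]$, a bounded-size residue computation to evaluate $\hat\rho_{2(n-j)}(\hat W_j)$, and a lookup among the $216$ Clifford adjoint representations of \cref{rmk:CliffProp} --- that is, $\Oo(1)$ arithmetic operations on ring elements per step. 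Hence the canonical form of $U$ is produced in $\Oo(n)$ arithmetic operations, plus one final matrix product to extract $C$.

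The step that needs care is internal to the peeling move: one must check that the eight cases of \cref{prop:canonicalPatt} are pairwise inequivalent under $\sim_\Clifford$, so that $\rho_0((\hat M'_{j+1})_{++})$ together with $\rho_0((\hat M'_{j+1})_{--})$ can be read back unambiguously from $\hat\rho_{2(n-j)}(\hat W_j)$ and the eight elements of $\Ll\Mm$ are thereby distinguished; and one must check that the least denominator exponent of the $++$ block drops by exactly $2$ at each step, so that the recursion ends after exactly $n$ steps. Both facts are already contained in \cref{prop:canonicalPatt}, so what remains is bookkeeping; the one point worth stating explicitly is the global-phase caveat above, namely that the final Clifford of the canonical form cannot be obtained from $\hat U$ alone and must be fixed from $U$.
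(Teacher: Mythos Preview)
Your proof is correct and follows essentially the same approach as the paper: compute $\hat U$, use \cref{prop:canonicalPatt} to read off the leftmost syllable from the residues $\hat\rho_{2n}(\hat U_{++})$ and $\hat\rho_{2n}(\hat U_{-+})$, peel it off by left-multiplying by its transpose, and recurse. Your remark that the final Clifford must be recovered from $U$ itself (rather than from $\hat U$, which only sees $216$ of the $648$ Cliffords) is a valid point that the paper's proof does not make explicit.
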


\begin{proof}
  From $U$, we compute the adjoint representation $\hat{U}$ using a
  constant number of operations, in the process determining the LDE
  $k$ of $\hat{U}_{++}$. By \cref{prop:canonicalPatt}, we have two
  cases depending on the value of $k$. If $k=0$, $U$ is equivalent to
  a Clifford operator $C$ and $M$ can be found via lookup table. If
  $k>0$, $k$ is even by \cref{prop:canonicalPatt} and so let
  $n=\frac{k}{2}$. Then we can find the leftmost syllable $M_{n}$ in a
  constant number of operations by evaluating
  $\hat{\rho}_{2n}(\hat{U}_{++})$ and
  $\hat{\rho}_{2n}(\hat{U}_{-+})$. Now, calculate
  $\hat{U}'=\hat{M}_n^\Trans\hat{U}$ - by \cref{prop:canonicalPatt},
  we know $\hat{U}'$ is the adjoint representation of a canonical form
  with LDE $k-2$ of $\hat{U}_{++}'$ such that the $T$-count of
  $\hat{U}'$ is $n-1$. Carrying out this procedure recursively, we are
  left with the $U$ equivalent canonical form
  \[
  M=M_n M_{n-1}\ldots M_1 C
  \]
  where it took a constant number of operations to calculate each
  $M_i$ and $C$, thus requiring an overall runtime of $\Oo(n)$.
\end{proof}

We conclude with two important consequences of the uniqueness of
canonical forms.

\begin{proposition}
  \label{prop:TOpt}
  Canonical forms are \emph{$T$-optimal}: for any canonical form with
  $T$-count $n$ there are no equivalent $\CliffordT$ operators with a
  number of power of $T$ gates less than $n$.
\end{proposition}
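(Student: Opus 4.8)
The plan is to obtain $T$-optimality immediately from the existence and uniqueness of canonical forms already proved above. Let $U$ be a canonical form with $T$-count $n$, and suppose for contradiction that the operator represented by $U$ also admits a circuit $V = A_1 \cdots A_N$ in which only $m < n$ of the factors $A_i$ are powers of $T$, the remaining factors being Clifford operators.

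First I would note that the constructive proof of \cref{prop:canonicalexistence}, read together with \cref{lem:TPow}, shows that appending a single Clifford factor to a canonical form leaves its $T$-count unchanged, whereas appending a single power-of-$T$ factor $T^a$ increases the $T$-count of the resulting canonical form by at most $1$, since $T^a$ is either a Pauli operator or Clifford-equivalent to $T$. Processing the factors of $V$ from left to right, starting from the canonical form of the identity, therefore yields a canonical form $V'$ for the same operator whose $T$-count $m'$ satisfies $m' \le m$; this is precisely the $T$-count-non-increasing behaviour recorded, for the single-$T$ generating set, in \cref{rmk:TMincanonical}, now applied with each $T^a$ treated as one atomic gate.

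Next I would apply uniqueness. By \cref{prop:canonicalunique}, a $\CliffordT$ operator has exactly one canonical form, so $V' = U$, and in particular their $T$-counts agree, $m' = n$. Combined with the bound of the previous paragraph this gives $n = m' \le m < n$, a contradiction, so every circuit for $U$ must use at least $n$ powers of $T$. I do not anticipate any genuine obstacle, since all the substance is contained in \cref{prop:canonicalexistence,lem:TPow,prop:canonicalunique}; the only point demanding care is to make sure each factor $T^a$ is counted as a single ``power of $T$ gate'' when running the rewriting procedure, so that by \cref{lem:TPow} it contributes at most one to the running $T$-count, matching the quantity being minimized in the statement.
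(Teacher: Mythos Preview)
Your proposal is correct and follows essentially the same approach as the paper: combine existence of canonical forms (\cref{prop:canonicalexistence}), the fact that the rewriting procedure never increases the $T$-count relative to the number of power-of-$T$ gates (\cref{rmk:TMincanonical} together with \cref{lem:TPow}), and uniqueness (\cref{prop:canonicalunique}) to conclude $T$-optimality. Your version is simply more explicit, spelling out the contradiction and the care needed to treat each $T^a$ as a single gate, but the substance is identical.
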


\begin{proof}
  By \cref{prop:canonicalexistence} we know that every $\CliffordT$
  operator admits a canonical form, and by
  \cref{prop:canonicalunique}, we will have that these canonical forms
  are both unique. Furthermore, by \cref{rmk:TMincanonical}, we know
  that in putting any $\CliffordT$ operator into either canonical form
  by the algorithms laid out in \cref{cor:canonicalalgorithm}, the
  $T$-count may \emph{only} decrease compared to the number of power
  of $T$ gates. In combination, these statements suffice to show
  $T$-optimality.
\end{proof}

\begin{proposition}
  \label{prop:Tnum}
  Let $\epsilon>0$. There exists $U\in\mbox{SU(3)}$ whose
  $\epsilon$-approximation by a Clifford+$T$ circuit requires a number
  $n$ of $T$ gates where $n\gtrsim
  8\log_6\left(\frac{1}{\epsilon}\right)-K$ for $K\approx0.543$.
\end{proposition}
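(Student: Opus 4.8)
The plan is to run a counting argument in the style of the lower bounds for single-qubit Clifford+$T$ approximation. The key point is that by Propositions \ref{prop:canonicalexistence}, \ref{prop:canonicalunique}, and \ref{prop:TOpt}, the set of Clifford+$T$ operators of $T$-count at most $n$ is in bijection with the set of canonical forms of $T$-count at most $n$. So I would first count the number $N(n)$ of canonical forms with $T$-count at most $n$. From Definition \ref{def:canonical}, a canonical form of $T$-count exactly $j\geq 1$ consists of a leading choice (one of the $3$ options $\epsilon$, $T$, $H^2T$ — but note the leading $T$ or $H^2T$ contributes to the $T$-count, so I must be careful to set this up as: a leftmost syllable from $\{T, H^2T\}$ giving $2$ options, or for $j\geq 1$ more precisely the first syllable is special), followed by $j-1$ syllables each chosen from the $6$-element set $\Ll'\Mm T$, and a final Clifford from the $648$ elements of $\Clifford$. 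The cleanest bookkeeping gives roughly $N(n) \approx 648\left(2\cdot 6^{n-1} + \cdots\right) \sim c\cdot 6^{n}$ for an explicit constant $c$; summing over $j\leq n$ the geometric series is dominated by its top term, so $N(n) \leq C\cdot 6^{n}$ for an explicit $C$ (one should track the constant carefully since it feeds into $K$).

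Next I would invoke a volume/covering argument on $\mathrm{SU}(3)$. Each Clifford+$T$ operator, projected appropriately (or directly in $\mathrm{SU}(3)$ via the operator-norm or trace distance $\epsilon$), can $\epsilon$-approximate only those $U$ lying in a ball of radius $\epsilon$ around it. If a set of $N(n)$ operators $\epsilon$-covers $\mathrm{SU}(3)$, then $N(n)$ balls of radius $\epsilon$ must cover $\mathrm{SU}(3)$, hence $N(n)\cdot \mathrm{Vol}(B_\epsilon) \geq \mathrm{Vol}(\mathrm{SU}(3))$. Since $\mathrm{SU}(3)$ is an $8$-dimensional manifold, $\mathrm{Vol}(B_\epsilon) = \Theta(\epsilon^{8})$, so $C\cdot 6^{n}\cdot c'\epsilon^{8} \geq \mathrm{Vol}(\mathrm{SU}(3))$. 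Taking logarithms base $6$ and rearranging yields $n \geq 8\log_6(1/\epsilon) - K$ for a constant $K$ absorbing $\log_6 C$, $\log_6 c'$, and $\log_6 \mathrm{Vol}(\mathrm{SU}(3))$. The value $K\approx 0.543$ comes out by plugging in the exact constants; the dimension $8$ is exactly the dimension of $\mathrm{SU}(3)$, which is why the coefficient is $8$ — consistent with the $8$-dimensional adjoint representation used throughout the paper. Strictly speaking the proposition only asserts the existence of one hard $U$, which is immediate: if every $U$ were approximable with fewer than $n$ gates then the $N(n-1)$ operators of $T$-count $<n$ would $\epsilon$-cover $\mathrm{SU}(3)$, contradicting the volume bound for appropriate $n$.

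The main obstacle I anticipate is getting the constant $K$ exactly right, which requires (i) an exact count of canonical forms rather than just the asymptotic $\Theta(6^n)$ — in particular correctly handling the special leading syllables and the final Clifford factor, and whether global phases are quotiented out (the paper's Cliffords are taken with determinant $1$, $648$ of them, which matters); and (ii) pinning down the normalization of the metric on $\mathrm{SU}(3)$ and the leading coefficient of $\mathrm{Vol}(B_\epsilon)\sim c'\epsilon^8$ for that metric, since a different choice of distance (operator norm vs.\ Frobenius vs.\ Fubini–Study-type) shifts $K$. A secondary subtlety is that a single Clifford+$T$ operator approximates a whole $\epsilon$-ball, not just itself, so one must make sure the covering inequality is applied in the correct direction; this is routine but worth stating cleanly. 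Everything else — the bijection with canonical forms, the geometric-series sum, and the final logarithm manipulation — is straightforward given the results already established.
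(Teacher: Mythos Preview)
Your proposal is correct and follows essentially the same volume-counting argument as the paper: count canonical forms of $T$-count at most $n$ (the paper gives the exact count $\tfrac{216}{5}(8\cdot 6^{n}-3)$), use that $\mathrm{SU}(3)$ is an $8$-dimensional manifold so an $\epsilon$-ball has volume $\sim (\pi^{4}/24)\epsilon^{8}$, compare against the total volume $\sqrt{3}\pi^{5}$ of $\mathrm{SU}(3)$, and solve the resulting covering inequality for $n$. The only difference is that the paper simply quotes these explicit constants rather than flagging the metric-normalization and phase-quotient subtleties you raise.
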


\begin{proof}
  This follows from a volume-counting argument. Indeed, there are
  $216/5(8\cdot6^n-3)$ canonical forms of $T$-count at most
  $n$. Moreover, each $\epsilon$-ball occupies a volume of
  $(\pi^4/24)\epsilon^8$ as $\epsilon$ asymptotes towards zero (by
  which the 8-dimensional manifold $SU(3)$ becomes locally
  Euclidean). We need to cover the full volume $\sqrt{3}\pi^5$ of
  $SU(3)$ to guarantee that every operator can be approximated up to
  $\epsilon$. Therefore $n$ needs to satisfy
  \[
  \frac{216}{5}(8\cdot6^n-3)\frac{\pi^4}{24}{\epsilon^8}
  \gtrsim\sqrt{3}\pi^5.
  \]
  From which the result follows.
\end{proof}

\section{Conclusion}
\label{sec:conc}

Significant advances in our understanding of the Clifford+$T$ group
for both single- and mutli-qubit circuits have been made in the past
decade. Analogous results for qudits of higher dimension, however,
remained elusive. In this paper we contribute to the theory of
single-qutrit Clifford+$T$ circuits by providing a canonical form for
single-qutrit Clifford+$T$ circuits. We show that every Clifford+$T$
operator admits a unique canonical representation and that this
representation is $T$-optimal. We leave the question of generalizing
these canonical forms to qudits of higher prime dimensions as an
avenue for future work.

\section{Acknowledgements}
\label{sec:acknowledgements}

NJR thanks Peter Selinger for valuable insights, and Alex Bocharov and
Vadym Kliuchnikov for stimulating discussions. ANG thanks Mark Howard
for providing the appropriate terminology for some of the techniques
used in the paper. ANG and NJR thank Earl Campbell for helpful
discussions. ANG, NJR, and JMT were partially funded by the Department
of Defense. This work was funded in part by the Army Research
Laboratory Center for Distributed Quantum Information.

\bibliographystyle{abbrv} \bibliography{qutrit}
\newpage
\begin{appendices}
\section{}
\label{app:symplectic}
As $D$ is diagonal and unitary, we may write it as
\[
D=\left(\begin{array}{ccc} e^{i \beta_1} & 0 & 0\\ 0 & e^{i \beta_2}
& 0\\ 0 & 0 & e^{i \beta_3}
\end{array}\right)
\]
Direct computation of $\hat{D}$ yields
\[
\hat{D}=\left(\begin{array}{cccc|cccc} 1 & 0 & 0 & 0 & 0 & 0 & 0 &
0\\ 0 & d_1 & d_2 & d_3 & 0 & d_4 & d_5 & d_6\\ 0 & d_3 & d_1 &
d_2 & 0 & d_6 & d_4 & d_5\\ 0 & d_2 & d_3 & d_1 & 0 & d_5 & d_6 &
d_4\\ \hline 0 & 0 & 0 & 0 & 1 & 0 & 0 & 0\\ 0 & -d_4 & -d_5 &
-d_6 & 0 & d_1 & d_2 & d_3\\ 0 & -d_6 & -d_4 & -d_5 & 0 & d_3 &
d_1 & d_2\\ 0 & -d_5 & -d_6 & -d_4 & 0 & d_2 & d_3 & d_1
\end{array}\right)
\]
where we have defined
\begin{align*}
	d_1&=\frac{1}{3}\left[\cos(\beta_1-\beta_2)+\cos(\beta_2-\beta_3) +
	\cos(\beta_3-\beta_1)\right]\\ d_2&=\frac{1}{6}
	\left[2\cos(\beta_1-\beta_2) -
	\cos(\beta_2-\beta_3)-\cos(\beta_3-\beta_1) +
	\sqrt{3}\left(\sin(\beta_2-\beta_3) -
	\sin(\beta_3-\beta_1)\right)\right]\\ d_3 & =
	\frac{1}{6}\left[2\cos(\beta_1-\beta_2)-\cos(\beta_2-\beta_3) -
	\cos(\beta_3-\beta_1)-\sqrt{3}\left(\sin(\beta_2 - \beta_3) -
	\sin(\beta_3 - \beta_1)\right)\right]\\ d_4&=\frac{1}{3}
	\left[\sin(\beta_1-\beta_2) + \sin(\beta_2-\beta_3) +
	\sin(\beta_3-\beta_1)\right] \\ d_5 &
	=\frac{1}{6}\left[2\sin(\beta_1-\beta_2) - \sin(\beta_2-\beta_3) -
	\sin(\beta_3-\beta_1) - \sqrt{3}\left(\cos(\beta_2-\beta_3) -
	\cos(\beta_3-\beta_1)\right)\right] \\ d_6 & =
	\frac{1}{6}\left[2\sin(\beta_1-\beta_2) -
	\sin(\beta_2-\beta_3)-\sin(\beta_3 - \beta_1) +
	\sqrt{3}\left(\cos(\beta_2 -
	\beta_3)-\cos(\beta_3-\beta_1)\right)\right]
\end{align*}
This means we have $\hat{D}_{++}=\hat{D}_{--}=A$ and
$\hat{D}_{-+}=-\hat{D}_{+-}=-B$, subject to the conditions that $A
A^\Trans+B B^\Trans=A^\Trans A+B^\Trans B=\Id$, $A B^\Trans=B
A^\Trans$, and $A^\Trans B = B^\Trans A$ due to $\hat{D}$ being
special orthogonal. To see that these conditions suffice to show
$\hat{D}$ is symplectic, we must show
$\hat{D}^\Trans\Omega\hat{D}=\Omega$ with
\[
\Omega=\left(\begin{array}{cc}
0 & \Id_{4\times4}\\
-\Id_{4\times4} & 0
\end{array}\right).
\]
Using our properties for $A$ and $B$, we see
\[
\left(\begin{array}{cc}
A^\Trans & -B^\Trans\\
B^\Trans & A^\Trans
\end{array}\right)
\left(\begin{array}{cc}
0 & \Id\\
-\Id & 0
\end{array}\right)
\left(\begin{array}{cc}
A & B\\
-B & A
\end{array}\right)=
\left(\begin{array}{cc} B^\Trans A - A^\Trans B & B^\Trans B +
A^\Trans A\\ -A^\Trans A - B^\Trans B & -A^\Trans B + B^\Trans A
\end{array}\right)=
\left(\begin{array}{cc}
0 & \Id\\
-\Id & 0
\end{array}\right)
\]
and so $\hat{D}$ is symplectic.

\end{appendices}

\end{document}